\def\@fnsymbol#1{\ensuremath{\ifcase#1\or \dagger\or \ddagger\or
   \mathsection\or \mathparagraph\or \|\or **\or \dagger\dagger
   \or \ddagger\ddagger \else\@ctrerr\fi}}
\newcommand*\samethanks[1][\value{footnote}]{\footnotemark[#1]}
\DeclareMathOperator*{\argmax}{arg\,max}
\DeclareMathOperator*{\E}{\mathrm{E}}
\renewcommand{\d}{\,\mathrm{d}}
\newcommand{\rev}{\mathsf{Rev}}
\newcommand{\vcg}{\mathsf{VCG}}
\newcommand{\rvcg}{\mathsf{GTM}}
\newcommand{\myr}{\mathsf{Myerson}}
\newcommand{\spa}{\mathsf{SPA}}
\newcommand{\mec}{\mathsf{Mec}}
\newcommand{\mgtm}{\mathsf{MGTM}}
\let\sec\relax
\DeclareMathOperator*{\sec}{\mathrm{sec}}
\newtheorem{theorem}{Theorem}[section]
\newtheorem{lemma}[theorem]{Lemma}
\newtheorem{corollary}[theorem]{Corollary}
\theoremstyle{definition}
\newtheorem{definition}[theorem]{Definition}
\definecolor{myblue}{rgb}{0.25, 0.2, 0.75}
\definecolor{mygreen}{rgb}{0.15, 0.65, 0.25}
\definecolor{myred}{rgb}{0.75, 0.25, 0.15}
\newtheorem*{theorem*}{Theorem}
\title{Prior-Independent Auctions for Heterogeneous Bidders}
\author{Guru Guruganesh\thanks{Google Research. Email: \texttt{\{gurug,aranyak,wadi\}@google.com}.} \and Aranyak Mehta\samethanks[1] \and Di Wang\samethanks[1] \and Kangning Wang\thanks{Stanford University. Email: \texttt{knwang@stanford.edu}.}}
\date{}
\begin{document}
\maketitle

\thispagestyle{empty}

\begin{abstract}
We study the design of prior-independent auctions in a setting with heterogeneous bidders. In particular, we consider the setting of selling to $n$ bidders whose values are drawn from $n$ independent but not necessarily identical distributions. We work in the robust auction design regime, where we assume the seller has no knowledge of the bidders' value distributions and must design a mechanism that is prior-independent. While there have been many strong results on prior-independent auction design in the i.i.d.\@ setting, not much is known for the heterogeneous setting, even though the latter is of significant practical importance. Unfortunately, no prior-independent mechanism can hope to always guarantee any approximation to Myerson's revenue in the heterogeneous setting; similarly, no prior-independent mechanism can consistently do better than the second-price auction. In light of this,  we design a family of (parametrized) randomized auctions which approximates at least one of these benchmarks: For heterogeneous bidders with regular value distributions, our mechanisms either achieve a good approximation of the expected revenue of an optimal mechanism (which knows the bidders' distributions) or exceeds that of the second-price auction by a certain multiplicative factor. The factor in the latter case naturally trades off with the approximation ratio of the former case. We show that our mechanism is optimal for such a trade-off between the two cases by establishing a matching lower bound. Our result extends to selling $k$ identical items to heterogeneous bidders with an additional $O\big(\ln^2 k\big)$-factor in our trade-off between the two cases.
\end{abstract}


\newpage
\clearpage
\setcounter{page}{1}
\section{Introduction}
Auctions are a fundamental component of online commerce and are a suitable mechanism for many different applications. The classic study of auctions focuses on two fundamental objectives: revenue maximization and welfare maximization. In the latter case, there is a long established theory starting with the seminal work of \cite{vickrey1961counterspeculation}
which established the optimal auction for maximizing the welfare of the agents.
The auction is relatively simple when we are selling a single item as it coincides with the second-price auction. It is desirable as it is very easy to implement in practice and it requires no assumption on the value distributions of the buyers.  

The main drawback of the classical second-price auction is that it achieves no guarantees in general with respect to the optimal revenue that a seller can gain. The optimal revenue is obtained by the celebrated Myerson's auction~\cite{myerson1981optimal} which uses a more intricate mechanism that utilizes knowledge of a prior distribution on the private values of the buyers. However, such distributional information is often hard to come by and therefore not always practically justified. While one could argue that the seller can learn the distribution over repeated auctions, in practice, distributions need not remain static over time. Furthermore, any such learning mechanism introduces new incentives for the bidders to manipulate the learning itself, making such mechanisms hard to analyze. All this makes the use of Myerson's auction impractical and the second-price auction more appealing despite its poor guarantees on revenue. The need to find mechanisms that do not rely on priors is often referred to as the ``Wilson doctrine'' or the ``Wilson critique''~\cite{wilson1989game} and has been discussed in several works related to ours, e.g.,~\cite{allouah2020prior,DBLP:journals/geb/DhangwatnotaiRY15}.




A long line of work has tried to bridge the gap between these two different goals of maximizing revenue and not relying on prior information. Towards this, a canonical definition is that of \emph{prior-independent} auctions first introduced by \cite{DBLP:journals/geb/DhangwatnotaiRY15}.
Here, the goal is to find a mechanism -- typically dominant-strategy incentive-compatible (DSIC) -- that has no prior information but, no matter what the underlying value distribution is, has revenue competitive with the revenue-optimal auction tailored for that distribution. Specifically, in an i.i.d.\@ setting, once the auction is chosen, an adversary may choose any distribution (from a restricted class) and give each of the buyers a value drawn from this distribution. The mechanism is evaluated in expectation against the revenue achievable by the optimal mechanism that knows the distribution beforehand. In a series of results, beginning with the early work of~\cite{neeman2003effectiveness}, it is shown that if the distribution comes from a well-formed family such as a \emph{monotone-hazard-rate} (MHR) or \emph{regular} distribution, then it is possible to design such prior-independent auctions in the i.i.d.\@ setting. (We give a more detailed survey in~\cref{sec:related}.) The work of \cite{allouah2020prior} provided a characterization of optimal prior-independent mechanisms in the i.i.d.\@ setting, gave close upper and lower bounds for two-bidder regular distributions, and proved optimality of the second-price auction for two-bidder MHR distributions. \cite{DBLP:conf/focs/HartlineJL20} finally closed the gap by giving the tight bounds for two-bidder i.i.d.\@ regular distributions.

While the i.i.d.\@ model for the priors is amenable to clean results and provides nice mathematical intuition, in many practical applications, we do not expect to see identical bidders.  In this paper, we investigate the potential of prior-independent auction design for revenue maximization for independent but non-identical (i.e.\@ heterogeneous) bidders. 
This is especially important in online ad auctions where in the same auction we see advertisers of different scales, or with different goals such as brand advertising, targeted advertising and performance-driven advertising (see, e.g., \cite{golrezai2021boosted}). It is therefore imperative to examine what kind of results are achievable in the prior-independent setting with non-identical bidders.

\subsection{Our Results and Techniques}
\subsubsection*{Metric and Benchmarks}
Extending these results to the heterogeneous\footnote{We use the term ``heterogeneous'' to stand for independent but not necessarily identical distributions.} setting immediately runs into a challenge: the lack of an existing benchmark that is effective at distinguishing the performance of mechanisms in our context. Indeed, as long as the family of distributions being considered is reasonably general (e.g.\@ MHR or regular distributions), no prior-independent mechanism can achieve anything non-trivial with respect to the canonical benchmarks.

\begin{itemize}
    \item \textbf{No mechanism can guarantee any $\varepsilon>0$ approximation of the Myerson revenue.} 
    Consider a simple example of two buyers with deterministic (but unknown) values $v_1 = 1$ and $v_2 = x > 1$, where the Myerson revenue is $x$. Informally speaking,\footnote{\cref{thm:lower} with $\tau \to +\infty$ gives a rigorous proof. Also note one can turn any deterministic value in our examples into a uniform distribution over a tiny range around that value to have continuous and regular bidder value distribution.} if for any $\varepsilon>0$ a DSIC prior-independent mechanism $M$ can guarantee at least $\varepsilon \cdot x$ for all $x$'s, then the probability of allocating to buyer $2$ must strictly increase with $x$ (by DSIC) and eventually has to go above $1$, which gives a contradiction. 
    
    Noting that revenue maximization is trivial for multiple i.i.d.\@ point distributions (via e.g.\@ a second-price auction), we can see that this example already highlights the comparative difficulty in the non-i.i.d.\@ setting. This emphasizes that the prior-independent results in the i.i.d.\@ setting cited above heavily leverage the identical nature of the distributions, originating from the intuition from the work of \cite{bulow1996auctions} that an additional i.i.d.\@ bidder's random value draw serves as a reasonably good reserve price (see also~\cite{DBLP:journals/geb/DhangwatnotaiRY15}). 
\item \textbf{No mechanism can guarantee at least  $1+\varepsilon$ times the second-price for any $\varepsilon>0$. } With the above observation, it is natural to turn to the canonical benchmark in the prior-free setting, and ask if we can always beat the expected revenue of the second-price auction, say by a $(1+\varepsilon)$ factor for some $\varepsilon>0$. However, this is not possible even over MHR distributions, since it is shown in \cite{allouah2020prior} that the second-price auction is the optimal prior-independent mechanism for i.i.d.\@ MHR distributions with two bidders.\footnote{Our impossibility results (\cref{thm:ab_lower2}) also rule out mechanisms with ``beating second-price'' type guarantees such as getting at least $(1+\varepsilon)$ of the second-price revenue whenever possible, and otherwise (i.e. on distributions where Myerson gets less than $(1+\varepsilon)$ times the second-price revenue) getting the same as second-price revenue.}
\end{itemize}

This suggests that in the heterogeneous context, any non-trivial guarantee w.r.t. either benchmark would be infeasible, and thus we need to find a new benchmark to tell apart good and bad mechanisms (in terms of revenue guarantees). For this purpose, it is illustrative to revisit our earlier two-buyer example with the family of point distributions where $v_1=1$ and $v_2=x$ for all $x \geq 1$. Consider any DSIC prior-independent mechanism $M$, and denote $M(x)$ as the expected revenue of $M$ when $v_2=x$. Myerson knows the distributions and thus always gets revenue $x$; the second-price always gets revenue $1$; and intuitively $M(x)$ should look like \cref{fig:rev_curve}.\footnote{The figure is only for illustration purposes and not meant to be rigorous. Note that $x$ exactly captures the how non-identical the bidders are in this example, and intuitively one can imagine a similar figure in general where the $x$-axis captures the level of heterogeneity of the bidders, although coming up with a formal metric of it is beyond the scope of our study.}
%
Firstly, as discussed earlier, we know for larger $x$'s eventually we won't be able to compare with $x$, so we can only look at how much better we do compared to $1$, which is the only other quantity in the system. In addition, to achieve higher revenue than SPA for larger $x$'s, a DSIC mechanism must reduce the probability of allocating to buyer $2$ for smaller $x$'s, which means larger gap compared to the optimal revenue $x$ for smaller $x$'s (as the price for buyer $1$ is at most $1\leq x$).

\definecolor{c2}{rgb}{0.25, 0.2, 0.75}
\definecolor{c1}{rgb}{0.15, 0.65, 0.25}
\begin{figure}[h]
\centering
\begin{tikzpicture}[scale=0.8]
  \draw [->] (-0.2, 0) to (6.5, 0);
  \draw [->] (0, -0.2) to (0, 6.5);
  \draw [dotted] (1, 0) to (1, 6.5);
  \draw [dotted] (0, 1) to (1, 1);
  \draw [very thick, c1] (1, 1) to (6, 1);
  \draw [very thick, c1] (1, 1) to (6, 6);
  \draw [very thick, c2, domain=1:6] plot (\x, {0.4+0.4*ln(\x)+0.1*\x});
  \node at (-0.25, -0.25) {$0$};
  \node at (-0.25, 1) {$1$};
  \node at (1, -0.25) {$1$};
  \node at (6.5, -0.25) {$x$};
  \node at (-0.2, 6.8) {Revenue};
  \node at (6.5, 1) {\color{c1} $\spa$};
  \node at (6.9, 6) {\color{c1} $\myr$};
  \node at (6.6, 1.7) {\color{c2} $M(x)$};
\end{tikzpicture}
\caption{revenue curves for different point distributions in the illustrative example}
\label{fig:rev_curve}
\end{figure}
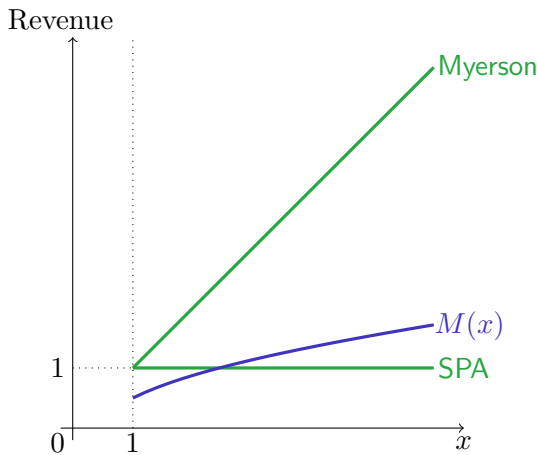

It is thus clear in this example that when considering worst-case guarantees (i.e. hold for all $x$), it is a necessary trade-off for \emph{any} prior-independent mechanism between how much advantage compared to $1$ for larger $x$'s and how much loss compared to $x$ for smaller $x$'s. A meaningful benchmark in more general cases must also capture this similar phenomenon, i.e. achieving higher revenue (compared to second-price) in high heterogeneity cases must sacrifice revenue (compared to Myerson) when heterogeneity is low (e.g., i.i.d.\@ case). 
This motivates a natural either-or type benchmark which integrates the two canonical benchmarks: on any bidder distributions (from some family of distributions) the expected revenue of the mechanism can either beat the second-price auction by some multiplicative factor $\alpha$, or is a $\beta$-approximation of the Myerson revenue. Our benchmark takes the either-or form because we do not have a good metric to capture the heterogeneity level in general (and not aim to come up with one in this work), and want a worst-case benchmark (i.e. holds for any level of heterogeneity). Informally $\alpha$ captures the revenue guarantee for higher heterogeneity (i.e. larger $x$'s in our example) and $\beta$ captures the revenue guarantee on the other end (i.e. smaller $x$'s). 



\subsubsection*{Main results}
Our main result (\cref{sec:main-result}) is that one can design mechanisms that achieve either a constant fraction of the optimal Myerson's revenue or beat the revenue of the second-price auction by a constant factor. 
\begin{theorem*} [Informal]
For any parameter $\tau > e$ there exists a prior-independent randomized mechanism such that for $n$ buyers with private values drawn from any independent (but not necessarily identical) regular distributions, the expected revenue of the mechanism is at least
\[
\min\left(\Omega\Big(\frac{1}{\ln\tau}\Big) \cdot \myr, \ \tau \cdot \spa\right),
\]
where $\myr$ is the expected revenue of Myerson's auction for the given distributions, and $\spa$ is that of the second-price auction.
\end{theorem*}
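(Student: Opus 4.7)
The approach is to design the mechanism as a random mixture of $\Theta(\ln\tau)$ DSIC sub-mechanisms, each a scaled variant of the second-price auction. Let $K = \lceil\log_2\tau\rceil$; the $k$-th sub-mechanism $M_k$ identifies the highest bidder and offers her a posted price of $2^k$ times the second-highest bid $s$, selling only upon acceptance. The mixture weights $\{\alpha_k\}_{k=0}^K$ are chosen to place a constant $\alpha_K = \Omega(1)$ on the ``maximally boosted'' sub-mechanism responsible for the $\tau$-improvement over $\spa$ when the top bid is far above the runner-up, while spreading the remaining mass uniformly over $k < K$ to form a logarithmic net of candidate prices between $s$ and $\tau s$ that is dense enough to approximate $\myr$. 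Each $M_k$ is DSIC because the winner's price depends only on others' bids, so the mixture is DSIC.

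The first step of the analysis is a per-instance revenue bound. Fixing a value realization with highest value $v_H$ and second-highest $s$, the expected revenue of the mixture equals $\sum_k \alpha_k \cdot 2^k s \cdot \mathbb{1}[v_H \geq 2^k s]$. The uniform weight on $k<K$ gives a $\Theta(\min(v_H,\tau s)/\ln\tau)$ contribution via a standard log-grid argument (some index $k$ has $2^k s$ within a factor of two of $\min(v_H,\tau s)$), and the constant weight on $k=K$ adds an extra $\Omega(\tau s)$ whenever $v_H \geq \tau s$. Taking expectations yields
\[
\E[\mathrm{rev}(M)] \;\geq\; \frac{\Omega(1)}{\ln\tau}\cdot\E[\min(v_H,\tau s)] \;+\; \Omega(\tau)\cdot\E[s\cdot\mathbb{1}[v_H\geq\tau s]].
\]

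The remaining step is to translate this bound into the claimed benchmark via a case analysis on the ``gap event'' $E = \{v_H \geq \tau s\}$. If $\E[s\cdot\mathbb{1}_E] \geq \spa/2$, the second summand directly yields $\Omega(\tau\spa)$. Otherwise, the contribution of $E$ to the second-price revenue is small, and one must show that $\E[\min(v_H,\tau s)] = \Omega(\myr)$, i.e., that the ``non-gap'' instances essentially carry the Myerson benchmark. This is where regularity is crucial: for each bidder $i$ and any threshold $c$ above her monopoly reserve $r_i^\ast$, one has $\E[(v_i - c)_+] \leq 2c\cdot\Pr[v_i\geq c]$, giving quantitative tail control. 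Combined with a Myerson-style decomposition of $\myr$ in terms of $\spa$ and per-bidder monopoly revenues $R_i$, one hopes to aggregate these per-bidder tail bounds to conclude the claim.

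The principal obstacle I anticipate is exactly this last case: bounding $\E[v_H\cdot\mathbb{1}_E]$ in terms of $\myr$ and $\spa$ under heterogeneous regularity. The naive bound $\E[(v_H-\tau s)_+]\leq\E[v_H\mathbb{1}_E]$ is tight in the worst case (e.g., when a single bidder's heavy tail dwarfs everyone else's second-highest bid), so one cannot hope to do this instance-by-instance; the argument must leverage regularity of the underlying distributions together with the DSIC structure to simultaneously control the contribution of these instances to both $\myr$ and to the mechanism's own revenue. I expect this is where the sub-mechanism family may need to be refined beyond the simple scaled-SPA template above, for instance by adding per-bidder random reserves drawn log-uniformly over a range indexed by competitors' bids, so that each bidder's monopoly price is approximated at a $\Theta(\ln\tau)$ loss, matching the approximation factor in the theorem.
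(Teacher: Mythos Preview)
Your mechanism is essentially the paper's geometric-threshold mechanism, and the per-instance lower bound
\[
\E[\mathrm{rev}] \;\ge\; \frac{\Omega(1)}{\ln\tau}\,\E[\min(v_H,\tau s)] \;+\; \Omega(\tau)\,\E\big[s\cdot\mathbf{1}_E\big]
\]
is correct. The gap is precisely where you flag it: converting this into $\min\big(\Omega(1/\ln\tau)\,\myr,\ \tau\,\spa\big)$ via the dichotomy $\E[s\cdot\mathbf{1}_E]\gtrless \spa/2$ is not justified, and your proposed remedy (adding per-bidder random reserves) would change the mechanism and is in fact unnecessary.

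The difficulty in your ``small gap-mass'' branch is structural: since $\E[\min(v_H,\tau s)]\le \tau\,\E[s]=\tau\,\spa$, whenever $\myr\gg\tau\,\spa$ you cannot have $\E[\min(v_H,\tau s)]=\Omega(\myr)$, and nothing in your dichotomy forces such instances into the other branch. The per-bidder tail inequality you invoke, $\E[(v_i-c)_+]\le 2c\,\Pr[v_i\ge c]$, is valid only for deterministic $c\ge r_i^\ast$; your threshold $\tau s$ is a random quantity depending on the other bidders' realizations, so there is no evident way to aggregate these bounds into a comparison with $\myr$.

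The paper's proof takes a different route that avoids per-realization accounting entirely. The pivot is the distribution-level bound $\myr\le(1+2\ln 2)\,s_1$, where $s_1$ is the median of $v^{(1)}$; this collapses the $\myr$ target to a single deterministic scalar. The case split is then on the \emph{structure of the distributions}, not on an expectation over realizations: (i) if no single bidder has $\Pr[v_i\ge s_1]>\tfrac14$, then $\Pr[v^{(2)}\ge s_1]$ is at least an absolute constant, so the plain $\spa$ component already yields $\Omega(s_1)=\Omega(\myr)$; (ii) if one bidder dominates, let $u_2$ be the median of the max over the remaining bidders; when $s_1\le O(\alpha)\,u_2$ the geometric grid hits a price within a constant factor of $s_1$ with probability $\Omega(1/\ln\alpha)$; (iii) when $s_1\gg\alpha\,u_2$, one \emph{upper-bounds $\spa$ itself} by integrating regularity-based tail bounds for $v_1$ and $\max_{i\ge 2}v_i$ against the fixed scales $s_1,u_2$, and shows the top threshold alone beats $\spa$ by $\Omega(\alpha/\ln^2\alpha)$. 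Choosing $\alpha=\Theta(\tau\ln^2\tau)$ balances (ii) and (iii). The same threshold mechanism suffices; what changes is that the analysis compares the mechanism to the deterministic quantities $s_1,u_2$ rather than to the random pair $(v_H,s)$.
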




We emphasize that the auctioneer can choose the value of the mechanism's parameter $\tau$, and the resulting revenue guarantee will hold  all distributions no matter the level of heterogeneity. One can interpret our benchmark as a \emph{robustness guarantee} in the sense that
 when SPA is guaranteed to be at least a (good) constant fraction of Myerson (e.g., in the i.i.d.\@ or nearly-i.i.d.\@ cases), the $\tau\cdot \spa$ part in the benchmark is a good approximation of Myerson, so a guarantee against the $\min$ still approximates the optimal revenue well. 
On the other hand, the benchmark still recovers elegantly in the cases when approximating Myerson is not possible: the expected revenue in such cases is not only bounded from below, but is in fact a $\tau$-factor better than that of SPA. 

We complement our main result with an almost matching upper bound.
\begin{theorem*}[Informal]
For $2$ bidders and any $\tau \geq 3$, there is no prior-independent mechanism with expected revenue greater than
\[
\min\left(\frac{2.5}{\ln \tau} \cdot \myr, \ \tau \cdot \spa\right),
\]
for all pairs of independent value distributions.
\end{theorem*}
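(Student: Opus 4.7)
The idea is to reduce the impossibility to a one-parameter family of two-bidder instances, extract from the envelope theorem a pointwise upper bound on the mechanism's revenue in terms of a single monotone function, and then derive a contradiction by integrating that bound against a weight that produces the $\ln \tau$ factor.

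\textbf{Step 1 (the family and the envelope bound).} Consider regular two-bidder distribution pairs $(F_1^\epsilon, F_2^{H,\epsilon})$ where $F_1^\epsilon$ is a tight continuous regular approximation of a point mass at $1$ and $F_2^{H,\epsilon}$ approximates a point mass at $H > 1$. As $\epsilon \to 0$, $\spa \to 1$, $\myr \to 1 + H$, and the mechanism's expected revenue tends to $r(H) := M(1, H)$. It suffices, therefore, to show that for any prior-independent DSIC mechanism some $H > 1$ achieves $r(H) \leq \min\!\bigl(\tfrac{2.5(1+H)}{\ln \tau}, \tau\bigr)$. Apply the envelope theorem to bidder $2$ with bidder $1$'s bid fixed at $1$: letting $a(H) := a_2(1, H) \in [0,1]$ be the monotone allocation function, $p_2(1, H) = H\,a(H) - \int_0^H a(t)\,dt$, and $p_1(1,H) \leq a_1(1,H) \leq 1 - a(H)$; hence
\[
r(H) \;\leq\; 1 + \Phi(H), \qquad \Phi(H) := (H-1)\,a(H) - \int_0^H a(t)\,dt.
\]
We may take $a(t) = 0$ for $t \leq 1$ (this can only enlarge $\Phi$, strengthening any infeasibility conclusion), so $a$ is the CDF of a sub-probability measure of some $T \in [1, \infty)$ and $\Phi(H) = \E\bigl[(T-1)\,\mathbb{1}[T \leq H]\bigr]$.

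\textbf{Step 2 (integration against $1/(H-1)^2$).} Suppose for contradiction that $r(H) > g(H) := \min\!\bigl(\tfrac{2.5(1+H)}{\ln\tau}, \tau\bigr)$ for every $H > 1$, i.e.\ $\Phi(H) > g(H) - 1$ pointwise. Let $H_* := \tfrac{\ln\tau}{2.5} - 1$ (where $g = 1$) and $T_\tau := \tfrac{\tau \ln\tau}{2.5} - 1$ (where $g$ first reaches $\tau$). Integrate the pointwise inequality over $H \in [\max(H_*, 1), T_\tau]$ against the weight $1/(H-1)^2$. By Fubini,
\[
\int \frac{\Phi(H)}{(H-1)^2}\, dH \;=\; \int (t-1)\!\left[\frac{1}{\max(t, H_*) - 1} - \frac{1}{T_\tau - 1}\right] dF_T(t) \;\leq\; \Pr[T \leq T_\tau] \;\leq\; 1,
\]
since the bracketed factor is always bounded by $1/(t-1)$ on the support $t \in [1, T_\tau]$. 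On the other hand, the dominant term of the right-hand side is
\[
\int \frac{g(H) - 1}{(H-1)^2}\, dH \;\approx\; \frac{2.5}{\ln \tau}\int \frac{dH}{H-1} \;=\; \frac{2.5}{\ln \tau}\,\ln\!\frac{T_\tau - 1}{H_* - 1} \;\to\; \frac{2.5}{\ln \tau}\cdot \ln \tau \;=\; 2.5,
\]
because $(T_\tau - 1)/(H_* - 1) \to \tau$ as $\tau$ grows. Combining the two displays gives $1 \geq \text{LHS} > \text{RHS} \to 2.5$, the desired contradiction.

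\textbf{Main obstacle.} The principal technical challenge is pinning down the exact constant $2.5$ rather than a weaker one. This requires carefully tracking the lower-order corrections on both sides: the additive term in $g(H) - 1 = \tfrac{2.5(H-1)}{\ln \tau} - (1 - \tfrac{5}{\ln \tau})$ contributes an $O(1/\ln \tau)$ to the right-hand side integral; the $-1/(T_\tau - 1)$ inside the Fubini bracket slightly slackens the left-hand side; and the boundary values $H_* - 1 \approx \tfrac{\ln \tau}{2.5}$ and $T_\tau - 1 \approx \tfrac{\tau \ln \tau}{2.5}$ must be computed with enough precision so that the leading $\ln\tau$ is extracted cleanly. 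One must also verify that the regime $H > T_\tau$, where the active constraint is $r(H) > \tau$, is automatically handled---this follows because $\Phi$ is nondecreasing, so $\Phi(T_\tau) > \tau - 1$ propagates upward. Finally, for small $\tau$ (close to $3$) the logarithmic argument is unnecessary because $\Phi(1) = 0$ already violates the required lower bound $g(1) - 1 = 5/\ln\tau - 1 > 0$ directly.
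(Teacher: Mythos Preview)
Your approach is sound and is a continuous analogue of the paper's argument. The paper works with the discrete family $(v_1,v_2)=(1,2^k)$ for $k=1,\dots,m$, mixes these instances with weights $p_k\propto 2^{-k}$, and observes that on the mixture any mechanism earns at most~$3$ (at most~$1$ from Buyer~1, and at most~$2$ from Buyer~2 since her mixed value distribution is essentially equal-revenue); since each $\mec_k\le 2^m<\tau$, the Myerson side of the benchmark is forced, and the weighted sum of the required lower bounds exceeds~$3$. Your integration against the weight $dH/(H-1)^2$ is precisely the continuous version of those geometric weights, and your Fubini bound $\int \Phi(H)/(H-1)^2\,dH\le 1$ is the envelope-theorem translation of ``revenue on the mixture is bounded.'' The paper's route is shorter and needs no envelope machinery; yours makes the choice of weight transparent and would adapt more readily to other benchmark shapes.

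One slip that does matter for the constant: for point masses at $1$ and $H>1$ there is a single item, so Myerson sells to Buyer~2 at price $H$ and $\myr=H$, not $1+H$. With your $1+H$ the benchmark $g$ is too large, so the impossibility you establish is formally weaker than the theorem. Fortunately the same argument goes through with the corrected $\myr=H$: then $H_*=\ln\tau/2.5$, $T_\tau=\tau\ln\tau/2.5$, the leading term is $\tfrac{2.5}{\ln\tau}\ln\tfrac{T_\tau-1}{H_*-1}>2.5$ (since $(T_\tau-1)/(H_*-1)>\tau$), and the correction term has magnitude at most $2.5/\ln\tau$, giving $\text{RHS}>1.5>1\ge\text{LHS}$ for all $\tau>e^{2.5}$. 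For $\tau\le e^{2.5}$ your $H\to 1^+$ observation already applies, now with $g(1)-1=2.5/\ln\tau-1>0$.
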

The upper bound above holds even over a very restricted class of distributions -- point distributions, i.e., when the two bidders values are deterministic.\footnote{This is in stark contrast to the case of point distributions in the i.i.d.\@ setting, where getting Myerson's revenue is easy by using e.g.\@ a second-price auction.} Note that these distributions are also MHR, thus showing that the main result above cannot be improved significantly for the class of MHR distributions.

To complete the picture, we show (in \cref{sec:impossibility}) that the main result is tight along different dimensions: One cannot achieve even a minor approximation of this form if either (a) the distributions are allowed to be general (non-regular) independent ones, or (b) the class of distributions is regular, but the mechanism is deterministic. Specifically, in these settings one cannot have a mechanism that achieves either a $\left(1+\varepsilon\right)$-multiple of the revenue of the second-price auction or an $\varepsilon$-factor of the revenue of Myerson's auction, for any constant $\varepsilon > 0$.

In \cref{sec:multi-item}, we extend our main result to the multiple-identical-item case. When we have $k$ identical items to sell and the buyers are unit-demand, we extend our mechanism and achieve a very similar trade-off between the two sides of the revenue objective, losing a factor of $O(\ln^2 k)$ in the trade-off.\footnote{This loss is $O(\ln k)$ when $\tau$ is large.} We prove:
\begin{theorem*}[Informal]
For $n$ buyers and $k$ items, there exists a prior-independent randomized mechanism achieving a revenue of at least
\[
\min\left(\Omega\Big(\frac{1}{\ln(k\tau)} \cdot \frac{1}{\ln k}\Big) \cdot \myr, \ \tau \cdot \vcg\right).
\]
\end{theorem*}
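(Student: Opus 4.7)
My plan is to lift the single-item mechanism of \cref{sec:main-result} to $k$ items by adding a second layer of randomization over the ``number of items to sell,'' alongside the random-bidder-and-multiplier sampling from the single-item case. Concretely, the candidate mechanism flips a biased coin and either runs plain VCG, or else samples a bidder $i^{*}$ uniformly at random, a dyadic item-count $m\in\{1,2,4,\ldots,k\}$ uniformly, and a multiplier $\lambda$ log-uniformly from an interval of length $O(\ln(k\tau))$; it then offers the posted price $\lambda\cdot b_{i^{*}}$ to the other bidders and awards at most $m$ units. As in the single-item construction, this mixture is DSIC and prior-independent.

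To bound its revenue I would decompose Myerson's expected revenue across dyadic scales: for each $m\in\{1,2,4,\ldots,k\}$, let $R_m$ be the part of $\myr$ attributable to the $m$-th through $2m$-th units. A pigeonhole over the $O(\log k)$ scales forces some scale $m^{*}$ to carry an $\Omega(1/\log k)$ fraction of $\myr$. Conditioning on the event that the mechanism selects $m\approx m^{*}$, the revenue at scale $m^{*}$ can be handled by replaying the single-item argument: either the top $m^{*}$ Myerson-winners have values so bunched near the $(m^{*}{+}1)$-th order statistic that VCG already yields an $\Omega(1/\tau)$ fraction of $R_{m^{*}}$, or a uniformly random bidder from outside those winners has a value within a $(k\tau)$-factor of the ``right'' posted price; in the second case, the log-uniform multiplier lands within a constant factor of that price with probability $\Omega(1/\ln(k\tau))$. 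Summing across the two outcomes produces the $\min\bigl(\Omega(1/(\ln(k\tau)\ln k))\cdot\myr,\ \tau\cdot\vcg\bigr)$ guarantee.

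The most delicate step will be the analysis conditional on the sampled bidder $i^{*}$: unlike the single-item case, where any bidder other than the Myerson-winner can serve as a ``benchmarking'' sample, here $i^{*}$ must simultaneously lie outside the top $m^{*}$ Myerson-winners and carry a value comparable to the $m^{*}$-th one. Handling this cleanly requires a case split on whether the set of bidders with value above the targeted reserve is large (so that a uniformly random bidder is, with constant probability, a ``typical'' one and regularity pins down the extracted revenue) or small (so that the $m^{*}$-th Myerson-winner essentially coincides with a VCG-winner and VCG itself already captures $R_{m^{*}}$ up to a $\tau$ factor). The extra $\log k$ factor relative to the single-item bound is paid for by this item-count discretization, while the $\log(k\tau)$ factor replaces the $\log\tau$ of the single-item case because the reserves must now be drawn from an interval whose dynamic range scales with $k$.
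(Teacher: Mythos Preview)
Your proposal has a genuine gap at the level of the mechanism itself. The single-item construction in \cref{sec:main-result} does \emph{not} sample a uniformly random bidder; it is a threshold mechanism that uses the second-highest bid $v^{(2)}$ as the reference and charges $\lambda\cdot v^{(2)}$. Your candidate mechanism, which offers a posted price $\lambda\cdot b_{i^{*}}$ based on a uniformly random bidder $i^{*}$, cannot work in the heterogeneous setting for the following reason: nothing ties the value of a uniformly random bidder to the relevant order statistics. Take $n$ large, $k=1$, bidder~$1$ with value $M$, bidder~$2$ with value $1$, and the remaining $n-2$ bidders with value $\varepsilon\ll 1$. Here $\myr\approx M$ and $\spa=1$, but with probability $1-2/n$ your sample $i^{*}$ has value $\varepsilon$, so any multiplier $\lambda\in[1,O(k\tau)]$ yields a price of at most $O(\tau\varepsilon)$ and essentially no revenue; the VCG branch gives only $1$, short of the required $\tau\cdot\spa=\tau$. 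Your case split (``the set of bidders above the reserve is large'') does not rescue this, because ``large'' would need to mean $\Omega(n)$ for a uniform sample to be typical with constant probability, and there is no such guarantee for heterogeneous bidders. The paper's fix is to use the $(t{+}1)$-th order statistic $v^{(t+1)}$ as the reference price for capacity $t$, which is always informative regardless of $n$.

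A secondary issue: even granting a workable reference price, your pigeonhole argument as stated pays the $1/\log k$ factor twice---once to find the heavy scale $m^{*}$ and once to land on it when randomizing over $m$---yielding only $\Omega\bigl(1/(\ln(k\tau)\,\ln^{2}k)\bigr)\cdot\myr$. The paper avoids this by proving the decomposition $\myr\le O(1)\sum_{j}2^{j}s_{2^{j}}$ (\cref{lem:multi_myr}) and then, in the non-VCG case, showing that the capacity-$2^{j}$ mechanism captures $\Omega(1/\ln(k\tau))\cdot 2^{j}s_{2^{j}}$ for \emph{every} $j$ simultaneously; summing over $j$ and dividing by the $\log k$ choices recovers the full $\myr$ bound with only a single $\log k$ loss. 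You should replace the pigeonhole with this summation, and replace the random-bidder sample with the $(m{+}1)$-th order statistic.
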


Finally, in \cref{sec:characterization} we provide a characterization of optimal prior-independent mechanisms in our heterogeneous setting for $2$ buyers (similar to the characterization for the i.i.d.\@ setting in~\cite{allouah2020prior}).

\subsubsection*{Techniques}
Our main result uses a simple class of randomized Threshold Mechanisms (see~\cref{sec:threshold} for definition). 
Informally, our threshold mechanism chooses a threshold from a carefully constructed distribution over thresholds and only allocate the item if the winner's bid is higher than the next highest bid by a factor of this threshold. It is not hard to see intuitively why one turns to this class of mechanisms in the prior-independent setting: Since a mechanism should be scale-free (see \cref{sec:characterization}), we exclude mechanisms such as posted-price or using reserve prices. Moreover, the mechanism has to be randomized, again by considering the simple two-bidder point distributions example from before. Our idea of randomizing from a set of geometrically separated thresholds can be found in many works in the field of approximation algorithms.

Our novel benchmark poses interesting new mathematical challenges: We emphasize that our benchmark is the minimum (which is a non-linear operator) over the expected revenues of the two canonical auctions over the distributions -- it is a much stronger requirement than beating the minimum of the two auctions in each realization of values. There have been related work on prior-free auctions (discussed in \cref{sec:related}) -- they can been seen as prior-independent auctions used on \emph{point} value distributions (i.e. the much weaker requirement mentioned in the preceding sentence) and hence are very special cases of our setting of \emph{regular} value distributions. (Recall that we made the regularity assumption since a similar result is impossible without any distributional assumption, according to \cref{thm:no_regularity}.) To achieve our results, we first use the median of the highest order statistic as an upper bound on the potential revenue achieved by Myerson's auction. We next show that the second-price auction's revenue is related to lower order statistics. We then argue that, depending on the gap between the order statistics, our threshold mechanism can either approximate the highest order statistic or beat the lower order statistics by a suitable margin. This relies on the randomness in our mechanism as it is oblivious to the order-statistics information. The main challenge and novelty in the proof is the \emph{distributional} analysis for arbitrary regular distributions -- existing works on prior-free auctions, as the name suggests, do not share with our work this challenge. We believe our techniques have potential applications in other settings related to regular distributions.

As we extend the result to the multi-item setting, these techniques do not suffice. We need to be competitive over a larger set of possibilities depending on the particular decomposition of the distributions. A simple extension would lose a factor of $k$ (the number of items) in the trade-off. To hedge against all these instances, we introduce the technique of randomly limiting the number of items sold, in a manner that does not lose too much in expected revenue. On the one hand, we could sell a large number of items at a small price or sell a few items for a large price. By randomly choosing the number, we hope to set the correct price more often. One particular challenge that arises is that to bound expected revenue of the optimal Myerson auction is non-trivial here. We use a subtle definition of the appropriate order statistics of a carefully chosen subset of the bidders to argue that we can still achieve a constant fraction of Myerson or beat the Vickrey Auction appropriately. This helps us reduce the loss in the trade-off to only $O(\ln^2 k)$. Whether  this can be reduced to a constant is an intriguing open question.   

\subsection{Related Work} \label{sec:related}

There is a lot of related work in the area related to revenue maximization starting with the work of \cite{myerson1981optimal}. Due to the practical and technical difficulty of dealing with priors as mentioned earlier, there has been a great deal of work in coming up with prior-independent mechanisms. These works can be classified into a few major related strands.

The first major strand is when the auction is standard and simple, e.g.\@ Vickrey, but the mechanism tries to recruit additional bidders to ensure that the new instance can compete with respect to the optimal revenue. This line of work was initiated by \cite{bulow1996auctions}. This work gained a great deal of interest in the algorithmic game theory community through the result of \cite{HR2009simple} who showed revenue guarantees for the Vickrey Auction with additional bidders. In particular, they showed that in quite general settings, a VCG Auction with $n$ additional bidders can compete with the optimal revenue when the bidders are drawn from heterogeneous regular distributions. This was extended to a more general class of distributions in the work of \cite{sivan2013vickrey}. The best results along these lines were obtained by \cite{fu2019vickrey} who showed that by carefully choosing the extra bidders, one needs only one extra bidder for the single-item case. A number of recent works (see e.g.\@ \cite{feldman201899,eden2016competition,beyhaghi19competition,DBLP:conf/sigecom/CaiS21}) showed that recruiting additional bidders provides a simple set of mechanisms that achieve
near optimal mechanisms even with multiple items and multi-dimensional valuations. Note that in the latter case, the optimal mechanisms are known to be extremely complex. All of the above results argue that a certain level of additional bidders allows a simple mechanism such as the Vickery Auction to compete with the revenue-optimal mechanism. 
However, it may not be possible to recruit additional bidders because they may not exist or may come at a great cost. Furthermore, these results offer no guarantees with respect to the instance with those additional bidders.

Another line of work attempts to produce bounds on the approximation factor of the revenue of certain simple mechanisms, potentially with knowledge of the prior, compared to the optimal revenue on the same instance. Notably, the bidder-augmentation result of \cite{bulow1996auctions} can be reinterpreted (\cite{DBLP:journals/geb/DhangwatnotaiRY15}) to show that the expected revenue of the Vickrey Auction is at least $\nicefrac{(n-1)}{n} \geq \nicefrac{1}{2}$ of the expected revenue of Myerson's auction for i.i.d.\@ distributions. A series of works, starting with~\cite{HR2009simple}, study \emph{simple} auctions which are competitive to the optimal auction even in heterogeneous settings. These results, including~\cite{alaei2019optimal,jin19pricing}, consider the competitiveness of simple auctions such as the second-price auction with an anonymous reserve. Note that these latter auctions, although simple, still depend on the knowledge of the prior.

Recently, there has been a third line of research building on and improving the result that Vickrey is a $\nicefrac{1}{2}$-approximation in the i.i.d.\@ setting, via new prior-independent auctions in two papers highly relevant to our work. Firstly, \cite{fu2015randomization} showed that in the case of a single-item and i.i.d.\@ bidders with a regular distribution, one can beat the revenue approximation guarantees of the Vickrey Auction through the use of randomization. \cite{fu2015randomization} introduced a randomized prior-independent auction called $(\varepsilon, \delta)$-inflated second-price auction: with probability $\varepsilon$ it runs a second-price auction, and with the rest of the probability the highest bidder wins only if its bid is greater than the next highest bid by a factor of $\alpha \geq 1$; otherwise the item is unallocated. They proved that this auction achieves a fraction of the optimal revenue strictly larger than $\frac{n-1}{n}$ fraction (for $n\geq 2$ bidders), and an improved factor of $0.512$ for two bidders, thus beating the approximation guarantee of the second-price auction. This result was further greatly generalized in~\cite{allouah2020prior} which introduced a family of prior-independent auctions called \emph{threshold-auctions}, and proved stronger results for revenue in the prior-independent two bidder setting. They show an approximation factor of $0.715$ for i.i.d.\@ MHR and $0.519$ for i.i.d.\@ regular distributions. The factor for MHR distributions is achieved by the second-price auction and is shown to be optimal, while the factor for regular distributions is achieved by a new auction in the class of threshold mechanisms; an upper bound of $0.556$ is also shown (under a technical assumption of finite Arzel\`a variation). This gap between $0.519$ and $0.556$ was finally closed (under the same technical assumption) by the work of \cite{DBLP:conf/focs/HartlineJL20}, who showed that the optimal prior-independent auction gives an approximation factor of $0.524$ for two bidders with i.i.d.\@ regular distributions. Our paper lies in this thread of research and studies the non-identical setting which is arguably more relevant practically as discussed earlier. We note that while some prior work, e.g., \cite{DBLP:journals/geb/DhangwatnotaiRY15}, does consider prior-independent approximation in a heterogeneous setting, such results still need to assume the existence of multiple bidders with the same attribute, i.e., distribution, which can essentially serve as i.i.d.\@ replacements for each other. We note that our auction $\rvcg$ also lies in the family of threshold-auctions introduced in \cite{allouah2020prior}. This family has been further studied (see e.g.\@ \cite{Mehta22,DBLP:conf/www/LiawMP23}) to show stronger welfare guarantees beyond VCG in the auto-bidding setting, which is an increasingly important area in the online advertising industry. This suggests our results on revenue guarantees may be of significant practical interest.

Another way to deal with the distributional assumption is to understand the cost of learning the prior distribution from repeated auctions. \cite{kleinberg2003value} considered the case where one must learn the value of the buyers' distributions using posted-price mechanisms. \cite{cole2014sample,DBLP:conf/stoc/GuoHZ19} considered the question of determining how many samples one needs from a distribution to compute the optimal mechanism for revenue maximization. There is a line of work on approximately revenue-optimal auctions with access of $1$ sample (see e.g.\@ \cite{DBLP:journals/geb/DhangwatnotaiRY15,DBLP:conf/soda/AzarKW14,DBLP:conf/ec/CorreaDFS19}).

Early works on the closely related direction of prior-free auctions include \cite{DBLP:conf/soda/GoldbergHW01,DBLP:journals/geb/GoldbergHKSW06,DBLP:conf/stoc/ChenGL14}, where the guarantees are worst-case and valuations are not even drawn from prior distributions. From another point of view, prior-free settings are prior-independent settings limited to (heterogeneous) point distributions. There are other algorithms that study the prior-free setting with additional assumptions such as the buyers having a specific form as such following a low-regret algorithm or participating in a dynamic auction where the state changes as the auctioneer must have limited liability (see e.g.\@ \cite{deng2019prior,braverman2021prior}).
Another approach towards robust auction design is that of distributionally-robust auctions which assumes that the auctioneer has knowledge of some summary statistics of the distribution such as the mean and the upper limit of the support, and characterizes the max-min performance, i.e., under the worst case distribution (see \cite{bachrach2022distributional,che2019distributionally}. A recent work of \cite{anunrojwong2022robustness} also tackles the question of designing optimal mechanisms for prior-independent distributions but considers the benchmark of regret. Their results focus on categorizing additive loss between the best mechanism and the optimal welfare that can be achieved and do not translate to giving multiplicative approximations as in our work. 
Recently, there is another line of work on ``revelation gaps'' that studies non-truthful auctions in the prior-independent framework~\cite{DBLP:conf/focs/FengH18,DBLP:conf/stoc/FengHL21}. \cite{hartline2021lower} give lower bounds on prior-independent auctions in the i.i.d.\@ setting.

\section{Preliminaries}
We consider the setting of selling one indivisible item to $n$ buyers. Buyer $i$ has valuation $v_i$ for the item. $v_i$ is drawn from a distribution $V_i$, and they are mutually independent. Different from a classical setting where $V_i$'s are public information, we assume the seller and the buyers do not have access to these distributions. The seller, therefore, must use a \emph{prior-independent} auction to sell the item. Her goal is to maximize her expected revenue, using a direct, dominant-strategy incentive-compatible (DSIC) mechanism (meaning that the mechanism asks each buyer for their valuation, and for each buyer, reporting their true valuation is a dominant strategy).

We use $\myr$ to denote the revenue-optimal auction characterized by the seminal work of \cite{myerson1981optimal}, and use $\spa$ to denote the second-price auction. Notice that $\myr$ is not prior-independent while $\spa$ is. If the context is clear, we also use $\myr$ and $\spa$ to denote their respective expected revenue on some given instance.

We use $v^{(k)}$ to denote the $k$-th maximum value in $\{v_1, v_2, \ldots, v_n\}$. In particular, $v^{(1)}$ is the maximum value and $\spa = \E_{(v_1, \ldots, v_n) \sim (V_1, \ldots, V_n)} \left[v^{(2)}\right]$. Moreover, we use $s_k$ to denote the median of the distribution of $v^{(k)}$.

\subsection{Regularity}
Many of our results use the notion of \emph{regularity} on the value distributions. The regularity assumption is frequently imposed in the literature of auction design, ever since the seminal work of \cite{myerson1981optimal}.

In the proof of our positive results, we restrict regular distributions to be continuous (which is commonly assumed). This allows us to define the median $s$ of the distribution for some random value $v$ to satisfy $\Pr[v \geq s] = 1/2$, and similarly for other quantiles, to simplify the exposition. The proofs directly generalize to point distributions (i.e. deterministic values; they can be considered as more broadly construed ``regular distributions''). However, in our negative results, we consider point distributions (i.e. deterministic values) to be regular.\footnote{In the negative results of \cref{thm:no_regularity}, \cref{thm:lower}, \cref{thm:ab_lower1} and \cref{thm:ab_lower2}, we can slightly perturb the distributions to make them continuous regular, without changing the message of the proof. \cref{thm:no_randomness} does utilize the fact that there can be a probability mass.}

Formally, regular distributions are the distributions where the virtual value function $\varphi(v) := v - \frac{1 - F(v)}{f(v)}$ is nondecreasing in $v$, where $f(\cdot)$ and $F(\cdot)$ are the corresponding probability density function (PDF) and cumulative distribution function (CDF). Special cases of regular distributions include all monotone-hazard-rate (MHR) distributions -- the distributions with hazard rate $\frac{f(v)}{1 - F(v)}$ nondecreasing in $v$.



\cref{lem:regularity} states a property of a regular distribution, which we will use later.
\begin{lemma}
For $v$ drawn from a regular distribution $V$, let $r$ be its Myerson's reserve (i.e., $r \in \argmax_{p} p \cdot \Pr[v \geq p]$), and $s$ be its median (i.e. $\Pr[v \geq s] = \frac{1}{2}$). We have
\begin{enumerate}
    \item $r \cdot \Pr[v \geq r] \leq s.$ In other words, Myerson's revenue is at most $s$, and thus at most twice the revenue of selling at $s$.
    \item If $s \leq \ell \leq r$ for some $\ell$, then $r \cdot \Pr[v \geq r] \leq 2 \cdot \ell \cdot \Pr[v \geq \ell]$.
\end{enumerate}
 
\label{lem:regularity}
\end{lemma}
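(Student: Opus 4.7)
The plan is to work in quantile space: let $p(q)$ be the inverse of the survival function, and define the revenue curve $R(q) := q \cdot p(q)$. Regularity of $V$ is equivalent to concavity of $R$ on $[0,1]$, and we always have $R(0) = 0$ together with $R(1) \geq 0$; these three properties are the only ingredients I will need. In this language, the quantities in the lemma become $R(q^*) = r \cdot \Pr[v \geq r]$, where $q^* := \Pr[v \geq r]$ is the maximizer of $R$, and $R(1/2) = s/2$.

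For part 1, I would split on whether $q^* \geq 1/2$ or $q^* < 1/2$. When $q^* \geq 1/2$ (equivalently $r \leq s$, since $p$ is non-increasing), the bound $R(q^*) = q^* \cdot r \leq 1 \cdot s = s$ is immediate. When $q^* < 1/2$, the argument goes through the chord of $R$ joining $(q^*, R(q^*))$ to $(1, R(1))$: concavity says $R$ lies above this chord, so evaluating at $q = 1/2$ and then dropping the non-negative $R(1)$ term yields
\[
\frac{s}{2} \;=\; R(1/2) \;\geq\; \frac{1/2}{1-q^*}\, R(q^*) + \frac{1/2 - q^*}{1 - q^*}\, R(1) \;\geq\; \frac{R(q^*)}{2(1-q^*)},
\]
which rearranges to $R(q^*) \leq (1-q^*)\, s \leq s$.

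Part 2 is then a short consequence of part 1 plus unimodality of $R$. The condition $s \leq \ell \leq r$ places $q_\ell := \Pr[v \geq \ell]$ in $[q^*, 1/2]$ by monotonicity of the survival function, and since $R$ is concave with global maximum at $q^*$, it is non-increasing on $[q^*, 1]$; hence $R(q_\ell) \geq R(1/2) = s/2$. Combining with the bound $R(q^*) \leq s$ from part 1 gives $\ell \cdot \Pr[v \geq \ell] = R(q_\ell) \geq s/2 \geq R(q^*)/2$, which is precisely $r \cdot \Pr[v \geq r] \leq 2\,\ell \cdot \Pr[v \geq \ell]$. The only step with any real content is the concavity/chord argument in the second case of part 1; everything else is monotonicity of $S$ or unimodality of $R$, and the main thing to be careful about is tracking on which side of $1/2$ the maximizer $q^*$ lies.
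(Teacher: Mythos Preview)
Your proof is correct and uses essentially the same approach as the paper: both work in quantile space with the concave revenue curve $R(q)$ and invoke the chord inequality between $(q^*, R(q^*))$ and the right endpoint $(1, R(1))$ (the paper simply takes the endpoint value to be $0$). The only organizational difference is that the paper establishes the chord bound for a general $\ell \in [s,r]$, thereby proving part~2 first and reading off part~1 by specializing to $\ell = s$, whereas you prove part~1 first and then obtain part~2 from unimodality of $R$ on $[q^*,1]$; both orderings are equally valid.
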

\begin{proof}
If $r \leq s$, then clearly $r \cdot \Pr[v \geq r] \leq r \leq s$. Otherwise, let $q(p) = \Pr[v \geq p]$ and consider any $\ell \in [s, r]$. For a regular distribution, the revenue is concave in the quantile space, i.e., $v \cdot q(v)$ is concave in $q(v)$. Therefore,
\[
\ell \cdot q(\ell) \geq 0 \cdot q(0) \cdot \frac{q(\ell) - q(r)}{q(0) - q(r)} + r \cdot q(r) \cdot \frac{q(0) - q(\ell)}{q(0) - q(r)} = r \cdot q(r) \cdot \frac{1 - q(\ell)}{1 - q(r)} \geq \frac{1}{2} \cdot r \cdot q(r)
\]
and therefore proves Statement 2.

Noticing that $q(s) = \frac{1}{2}$, we have $s \geq r \cdot q(r)$ if we set $\ell = s$, which proves Statement 1.
\end{proof}

\subsection{The Threshold Mechanisms}
\label{sec:threshold}
We will use a class of prior-independent mechanisms: the \emph{threshold mechanisms}. A threshold mechanism uses a finite number of thresholds $\{\lambda_1, \lambda_2, \ldots, \lambda_m\}$, where $\lambda_i$ happens with probability $w_i$, with $\sum_{i = 1}^m w_i = 1$. For a value profile $(v_1, v_2, \ldots, v_n)$, the mechanism generates a random threshold $\lambda_i$ according to the probabilities $(w_1, w_2, \ldots, w_m)$. It then looks at the top two values, $v^{(1)}$ and $v^{(2)}$. If $v^{(1)} \geq \lambda_i \cdot v^{(2)}$, the item is allocated to the buyer with the highest value for a price of $\lambda_i \cdot v^{(2)}$.  Otherwise, the item is not allocated.
(Since we are considering continuous distributions, ties happen with probability $0$, and we omit the mechanism's behavior to resolve ties. Ties can be broken in any consistent way without affecting our results.) Note that the threshold mechanisms are dominant-strategy incentive-compatible (DSIC).

We also generalize this definition to the multi-item case, which can be found in \cref{sec:multi-item}.

\section{Impossibility Results} \label{sec:impossibility}

As we noted in the introduction, our goal is to obtain a mechanism that is competitive with $\myr$ or significantly beats $\spa$ in a prior-independent setting. However, if distributions can be different across all the bidders, we will need to make some assumptions on the class of distributions of the buyers. As we show in \cref{thm:no_regularity}, no such mechanism can achieve the desired form of guarantees for all general distributions. (Following the convention of this line of work on prior-independent auctions, we limit our attention to DSIC mechanisms.)

\begin{theorem}
For any constant $\varepsilon > 0$, no DSIC mechanism can guarantee a revenue of $\min(\varepsilon \cdot \myr, (1 + \varepsilon) \cdot \spa)$ for general valuation distributions, even when there are only two buyers.
\label{thm:no_regularity}
\end{theorem}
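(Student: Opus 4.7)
The plan is to establish the result through a sublinear-growth obstruction: for any prior-independent DSIC mechanism $M$, I will show that its expected revenue on the two-bidder bid profile $(1, b)$ grows as $o(b)$ in $b$, and then exhibit an (irregular) distribution on which the target $\min(\varepsilon \myr, (1+\varepsilon) \spa)$ forces revenue at least $1+\varepsilon$, but $M$ cannot deliver this because the reserve regime that makes Myerson large is one on which $M$'s revenue is forced to be sublinear.

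Let $a_2(b)$ and $p_2(b)$ denote respectively the allocation probability to buyer $2$ and the expected payment from buyer $2$ when the bid profile is $(1,b)$. DSIC forces $a_2(\cdot)$ to be monotone non-decreasing and, via Myerson's payment identity,
\[
p_2(b) \;=\; b \, a_2(b) \;-\; \int_0^b a_2(t)\, dt.
\]
Because $a_2$ is monotone and bounded in $[0,1]$, the limit $A := \lim_{b\to\infty} a_2(b)$ exists, and the Cesaro mean theorem gives $(1/b)\int_0^b a_2(t)\, dt \to A$, so $p_2(b)/b \to 0$. Since buyer $1$'s expected payment is bounded by her value $1$, the total expected revenue $R(b)$ of $M$ on bid profile $(1,b)$ satisfies $R(b)/b \to 0$.

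Now fix $\varepsilon > 0$ and pick $K$ so large that $R(K) \le \varepsilon^2 K / (2(1+\varepsilon))$ and $\delta := (1+\varepsilon)/(\varepsilon K) < 1$. Consider the instance with $v_1 = 1$ deterministic, and $v_2 = K$ with probability $\delta$ and $v_2 = 1$ otherwise. Here $\spa = 1$, since in both realizations the second-highest bid is $1$. Moreover, the DSIC mechanism that sells to buyer $2$ at price $K$ when $v_2 = K$ and otherwise sells to buyer $1$ at price $1$ extracts expected revenue $\delta K + (1-\delta) = (1+\varepsilon)/\varepsilon + (1-\delta)$, so $\myr \ge (1+\varepsilon)/\varepsilon$ and hence $\varepsilon \myr \ge 1+\varepsilon$. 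The target therefore equals $\min(\varepsilon \myr, (1+\varepsilon)\spa) = 1+\varepsilon$. On the other hand, $M$'s expected revenue on this instance is at most
\[
(1-\delta) \cdot 1 \;+\; \delta \cdot R(K) \;\le\; 1 \;+\; \delta \cdot \frac{\varepsilon^2 K}{2(1+\varepsilon)} \;=\; 1 + \frac{\varepsilon}{2} \;<\; 1+\varepsilon,
\]
the required contradiction.

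The only substantive step is the $o(b)$ bound on $R(b)$, which follows immediately from DSIC monotonicity combined with Cesaro averaging; the rest is a direct calibration of $K$ and $\delta$ to make $\myr$ exceed $(1+\varepsilon)/\varepsilon$ while keeping $\spa = 1$, so that the mass $\delta$ placed on the ``large'' bid is too small for $M$ to convert its sublinearly-growing revenue on that atom into the required $(1+\varepsilon)$ in expectation.
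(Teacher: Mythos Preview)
Your argument is correct and follows a genuinely different route from the paper's.

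The paper uses a Yao-style averaging argument: it fixes a family of two-point instances $\mathcal{I}_j$ (with $v_1 \in \{1, \sqrt{k}\cdot 2^j\}$ and $v_2 = 1$), each with $\spa = 1$ and $\myr \ge \sqrt{k}$, and observes that any prior-independent mechanism run against a uniform mixture over the $\mathcal{I}_j$ is effectively facing a single ``mixed'' distribution whose $\myr$ is only $1 + O(1/\sqrt{k})$; hence some $\mathcal{I}_j$ witnesses revenue at most $1 + O(1/\sqrt{k})$, which is too small on both sides when $k$ is large. No structural fact about DSIC payments is invoked beyond the trivial bound $\mec \le \myr$ on the mixed instance.

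You instead extract the analytic consequence of DSIC directly: Myerson's payment identity plus Ces\`aro averaging give $R(b) = o(b)$ for the revenue on $(1,b)$, and then a \emph{single} calibrated two-point instance (with $v_2 \in \{1, K\}$) suffices. Your approach is more elementary in that it avoids the family-of-instances indirection and produces an explicit witness per mechanism; the paper's approach is more robust in that it never needs to open up the payment formula and would apply verbatim to broader classes of ``prior-independent'' objects where Myerson's identity might not be available. Both land on the same qualitative picture: the irregular two-point mass forces $\myr$ to be large while keeping $\spa = 1$, and no fixed mechanism can track the location of the high atom.
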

\begin{proof}
Fix an integer $k > 1$, and define for $j = 1,\ldots, k$, the instance $\mathcal{I}_j$ to be the following:
\begin{itemize}
\item $v_1 = \sqrt{k} \cdot 2^j$ with probability $2^{-j}$; and $v_1 = 1$ with probability $1 - 2^{-j}$.
\item $v_2 = 1$ with probability $1$.
\end{itemize}
We have $\spa(\mathcal{I}_j) = 1$, since $v^{(2)} = 1$; and $\myr(\mathcal{I}_j) \geq \sqrt{k}$, since selling to Buyer 1 at price $\sqrt{k}\cdot 2^j$ already gives revenue of $\sqrt{k}$. Imagine an adversary who picks instance $\mathcal{I}_j$ with probability $\frac{1}{k}$ for each $j = 1, 2, \ldots, k$. A prior-independent mechanism cannot distinguish which instance the adversary is picking, and it is essentially run on the mixed instance $\mathcal{I^*}$ where
\begin{itemize}
\item $v_1 = \sqrt{k} \cdot 2^j$ with probability $2^{-j} \cdot \frac{1}{k}$, for each $j = 1, 2, \ldots, k$; and $v_1 = 1$ with the rest probability.
\item $v_2 = 1$ with probability $1$.
\end{itemize}
However, even the optimal auction has $\myr(\mathcal{I}^*) = \max_p \left(p \cdot \Pr[v_1 \geq p] + v_2 \cdot (1 - \Pr[v_1 \geq p])\right) \leq 1 + \frac{2}{\sqrt{k}}$. Therefore, the revenue of any prior-independent mechanism on this mixed instance $\mathcal{I}^*$ is at most $1 + \frac{2}{\sqrt{k}}$ too. This means for any prior-independent mechanism, its revenue is at most $1 + \frac{2}{\sqrt{k}}$ for some $\mathcal{I}_j$ where $j \in \{1, 2, \ldots, n\}$. This is only $O\left(\frac{1}{\sqrt{k}}\right) \cdot \myr$ or $\left(1 + O\left(\frac{1}{\sqrt{k}}\right)\right) \cdot \spa$ when $k \to +\infty$.
\end{proof}

This motivates us to limit the class of value distributions. We find that \emph{regularity}, which is a widely imposed assumption in auction theory, suffices for us to show the positive results. Some common examples of regular distributions are uniform, exponential, equal-revenue, and MHR ones.

Now we move on to show that having randomness in our mechanism is also mandatory for a non-trivial guarantee. We state this result in \cref{thm:no_randomness}.

\begin{theorem}
For any constant $\varepsilon > 0$, no deterministic DSIC mechanism can guarantee a revenue of $\min(\varepsilon \cdot \myr, (1 + \varepsilon) \cdot \spa)$ for single-point valuation distributions, even when there are only two buyers.
\label{thm:no_randomness}
\end{theorem}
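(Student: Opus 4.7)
The plan is to exploit the threshold characterization of deterministic truthful single-item auctions and derive a contradiction by inspecting three carefully chosen point-distribution instances. Since values are deterministic, on any point instance $(v_1, v_2)$ we have $\spa = \min(v_1, v_2)$ and $\myr = \max(v_1, v_2)$, so the claimed benchmark equals $\min(\varepsilon \cdot \max(v_1, v_2),\ (1+\varepsilon) \cdot \min(v_1, v_2))$. By the standard Myerson characterization, a deterministic DSIC and individually rational single-item auction is fully determined by a pair of threshold functions $t_1(\cdot),\ t_2(\cdot)$: bidder $i$ wins iff $v_i \geq t_i(v_{-i})$ and pays exactly $t_i(v_{-i})$ when winning, while a losing bidder pays $0$.

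First I would consider the instance $(v_1, v_2) = (1, M)$ for $M$ large enough that $\varepsilon M > 1+\varepsilon$, so that the benchmark on this instance equals $1+\varepsilon$. Individual rationality forces bidder $1$'s payment to be at most her value $1 < 1+\varepsilon$, so the only way to reach revenue $\geq 1+\varepsilon$ is to allocate the item to bidder $2$ at a price of at least $1+\varepsilon$. Since that price equals $t_2(1)$, we conclude $t_2(1) \geq 1+\varepsilon$. By the symmetric instance $(M, 1)$, we likewise get $t_1(1) \geq 1+\varepsilon$.

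Finally I would examine the instance $(1, 1)$. For each bidder $i$, her value $v_i = 1$ is strictly below $t_i(1) \geq 1+\varepsilon$, so neither bidder meets her threshold and the mechanism's revenue is $0$. But the benchmark here is $\min(\varepsilon,\ 1+\varepsilon) = \varepsilon > 0$, yielding the desired contradiction. The strict inequality $1 < 1+\varepsilon$ makes the argument robust to any tie-breaking convention at the boundary $v_i = t_i(v_{-i})$, so there is essentially no obstacle beyond identifying the three instances; the key conceptual step is recognizing that satisfying the $(1+\varepsilon) \cdot \spa$ side of the benchmark on the two asymmetric instances with large $M$ pins the thresholds $t_i(1)$ above $1$ on \emph{both} sides, which then precludes any allocation at the symmetric instance $(1,1)$ where a positive benchmark is nonetheless required.
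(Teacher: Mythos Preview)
Your argument is correct and, in fact, a bit cleaner than the paper's. Both proofs begin identically: from the instance $(1,M)$ with $M$ large one sees that bidder~$2$ must win and pay at least $1+\varepsilon$, i.e., $t_2(1)\geq 1+\varepsilon$. From there the routes diverge. The paper keeps $v_1=1$ fixed and chases a monotonicity violation for bidder~$2$: since $t_2(1)\geq 1+\varepsilon$, bidder~$2$ loses on $(1,\,1+0.5\varepsilon)$, hence bidder~$1$ wins there; then by the analogous argument (starting from $(M',\,\tfrac{1}{1+0.5\varepsilon})$) bidder~$2$ must win on $(1,\,\tfrac{1}{1+0.5\varepsilon})$, contradicting monotonicity of bidder~$2$'s allocation. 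You instead exploit the symmetric instance $(M,1)$ to pin $t_1(1)\geq 1+\varepsilon$ as well, and then observe that at $(1,1)$ neither bidder clears her threshold, so revenue is $0<\varepsilon$. Your approach uses fewer instances, avoids the ``always allocates'' preliminary step and the monotonicity detour, and the final contradiction is more direct. One small imprecision: in a single-item auction it is not literally true that ``bidder~$i$ wins iff $v_i\geq t_i(v_{-i})$'' (the two threshold conditions could overlap), but your argument only needs the implication $v_i<t_i(v_{-i})\Rightarrow i$ loses, together with the fact that a loser pays at most~$0$ and a winner pays at most $t_i(v_{-i})$, both of which hold.
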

\begin{proof}
Suppose for the purpose of contradiction that such a deterministic truthful auction exists. It has to always allocate the item to someone, otherwise it gets $0$ revenue on that value pair and thus does not meet the theorem condition there.

For $M > 1 + \frac{1}{\varepsilon}$, it has to allocate to Buyer 2 on value pair $(v_1, v_2) = (1, M)$, and the payment has to be at least $(1 + \varepsilon)$.

Therefore, on value pair $(1, 1 + 0.5\varepsilon)$, it has to allocate to 
Buyer 1. Otherwise Buyer 2 will deviate to report $\hat v_2 = (1 + 0.5\varepsilon)$ on value pair $(1, M)$, to still get the item with less payment.

By the same logic, it has to allocate to Buyer 2 on value pair $\left(1, \frac{1}{1 + 0.5\varepsilon}\right)$. Therefore the allocation rule is not monotone -- Buyer 2 loses the allocation when his value increases -- and thus cannot be truthful.
\end{proof}

\section{Main Result: the Single-Item Case}
\label{sec:main-result}
We discuss our main result in this section. We give a family of parameterized prior-independent mechanisms that, informally speaking, always achieve a revenue (in expectation) either at least a certain fraction of the optimal prior-dependent expected revenue of $\myr$, or much better than the expected revenue of $\spa$. Our result holds for any set of buyers whose values are drawn from independent, but not necessarily identical, regular distributions.

We first define our geometric-threshold mechanisms, and then present our main result in \cref{thm:main}.
\begin{definition}[Geometric-Threshold Mechanisms]
Given parameters $\alpha\geq 1$ and $k\in \mathbb{Z}_{+}$, we denote $\rvcg(\alpha, k)$ as the threshold mechanism with the following $k + 1$ thresholds: $\lambda_1 = 1$ and $w_1 = \frac{1}{2}$; $\lambda_i = \alpha^{\frac{i - 1}{k}}$ and $w_i = \frac{1}{2k}$ for $i = 2, \ldots, k + 1$. Moreover, given a set of distributions on the private values of buyers, we abuse the notation to denote $\rvcg(\alpha, k)$ also as the expected revenue of the seller using the corresponding mechanism. 
\end{definition}

\begin{theorem}
For any parameter $\tau > e$, there exists some $\alpha = O(\tau\ln^2\tau)$ and $k = \ln\alpha \in \mathbb{Z}_{+}$ such that for $n$ buyers  with private values drawn from independent (but not necessarily identical) regular distributions, 
\[
\rvcg(\alpha, k) \geq \min\left(\Omega\Big(\frac{1}{\ln\tau}\Big) \cdot \myr, \ \tau \cdot \spa\right).
\]
Recall $\rvcg(\alpha, k)$, $\myr$ and $\spa$ denote the expected revenue achieved by the respective mechanisms from selling one item to the $n$ buyers. 
\label{thm:main}
\end{theorem}

\cref{cor:main} is immediately implied by \cref{thm:main}, and captures our main message in a simpler form -- for any constant $\tau$, there is a prior-independent mechanism that either beats $\spa$ by a factor of $\tau$, or constant-approximates $\myr$.
\begin{corollary}
For any parameter $\tau = O(1)$, there exists some $\alpha$ and $k$ such that for $n$ buyers with private values drawn from independent (but not necessarily identical) regular distributions, 
\[
\rvcg(\alpha, k) \geq \min\left(\Omega(1) \cdot \myr, \ \tau \cdot \spa\right).
\]
\label{cor:main}
\end{corollary}

Before proving \cref{thm:main}, we first present a lemma which bounds the revenue of $\myr$ in our setting. 
\begin{lemma}
For $n$ buyers with values $v_1, \ldots, v_n$ drawn from independent regular distributions $V_1,\ldots,V_n$,
recall that $s_1$ is the median of the distribution of $v^{(1)} = \max_{i = 1}^n v_i$. The expected revenue of $\myr$ for selling one item to the $n$ buyers satisfies
\[
\frac{1}{2} s_1 \leq \myr \leq (1 + 2 \ln 2) s_1.
\]
\label{lem:approximate_myerson}
\end{lemma}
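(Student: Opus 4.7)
My plan is to establish the two inequalities separately: the lower bound follows from a simple posted-price mechanism, while the upper bound comes from the ex-ante relaxation of Myerson's auction combined with a case analysis based on the per-bidder quantile at $s_1$.

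For the lower bound $\myr \geq s_1/2$, I will note that Myerson's expected revenue dominates that of any IR--IC mechanism, and in particular the anonymous posted-price mechanism that offers the item at price $s_1$ (breaking ties arbitrarily among accepting bidders). The event that at least one bidder accepts is exactly $\{v^{(1)} \geq s_1\}$, which by the definition of $s_1$ has probability $1/2$, so this mechanism collects expected revenue $s_1 \cdot \tfrac{1}{2}$.

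For the upper bound $\myr \leq (1 + 2\ln 2) s_1$, I will invoke the ex-ante relaxation
\[
\myr \;\leq\; \max\left\{ \sum_{i=1}^{n} R_i(x_i) \,:\, x_i \geq 0,\; \sum_{i=1}^n x_i \leq 1 \right\},
\]
where $R_i(x) := x \cdot F_i^{-1}(1-x)$ is the revenue curve of bidder $i$, concave by regularity. Set $p_i := \Pr[v_i \geq s_1]$; the definition of $s_1$ gives $\prod_i(1-p_i) = 1/2$, and combining with $-\ln(1-p) \geq p$ yields the budget $\sum_i p_i \leq \ln 2$. For any feasible $(x_i)$, I will split $\sum_i R_i(x_i)$ at the threshold $x_i = p_i$: when $x_i \geq p_i$, the quantile price satisfies $F_i^{-1}(1-x_i) \leq s_1$, so $R_i(x_i) \leq x_i s_1$; when $x_i < p_i$, I will show $R_i(x_i) \leq 2 s_1 p_i$ by invoking \cref{lem:regularity}(2) with $\ell = s_1$ in the subcase $s_1 \leq r_i$ (legal since $s_i \leq s_1$, which holds because $\Pr[v^{(1)} \geq s_i] \geq \Pr[v_i \geq s_i] = 1/2$), and using the concavity of $R_i$ together with $R_i(0) = 0$ in the subcase $s_1 > r_i$ (so that $R_i(x_i) \leq R_i(p_i) = p_i s_1$, as $p_i$ then lies on the increasing portion of $R_i$). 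Summing,
\[
\sum_i R_i(x_i) \;\leq\; s_1 \sum_i x_i + 2 s_1 \sum_i p_i \;\leq\; s_1 + 2 s_1 \ln 2 \;=\; (1 + 2\ln 2) s_1.
\]

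The main obstacle will be identifying the correct splitting threshold: working at the per-bidder quantile $p_i = q_i(s_1)$ is exactly what allows the ``high-price'' contributions $R_i(x_i)$ for $x_i < p_i$ to be paid for by the budget $\sum_i p_i \leq \ln 2$ coming from the median, while \cref{lem:regularity}(2) supplies the factor of $2$. A naive approach such as bounding $\myr \leq \E[v^{(1)}]$ does not work, since this expectation can be infinite even for independent regular distributions (e.g., equal-revenue), whereas the ex-ante relaxation respects the single-item constraint and is always finite.
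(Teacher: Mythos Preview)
Your proof is correct and is essentially the same as the paper's: both argue the lower bound via posting price $s_1$, and for the upper bound both pass to the ex-ante relaxation (the paper phrases this as ``virtual welfare from buyer $i$ is at most $p_i z_i$''), split each bidder's contribution according to whether the quantile price exceeds $s_1$, use regularity (\cref{lem:regularity}) for the high-price side, and bound $\sum_i \Pr[v_i\ge s_1]\le \ln 2$ from the definition of $s_1$. The only cosmetic difference is that you make the subcases $s_1\le r_i$ versus $s_1>r_i$ explicit when invoking \cref{lem:regularity}(2), whereas the paper cites \cref{lem:regularity} directly.
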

\begin{proof}
For the first inequality, notice that sequentially posting a price of $s_1$ for every buyer sells the item with probability $\frac{1}{2}$. Therefore, $\myr \geq \frac{1}{2} s_1$, since the revenue-optimal mechanism $\myr$ gets revenue at least that of the sequential posted-price mechanism.

For the second inequality, consider the virtual welfare (which equals the revenue) achieved by the optimal mechanism. Let $p_i$ be the probability that buyer $i$ wins in the optimal mechanism and $z_i$ be the $(1 - p_i)$-th quantile of $V_i$, i.e., $\Pr[v_i \geq z_i] = p_i$. Then using the ex-ante relaxation, the virtual welfare from buyer $i$ is at most $p_i \cdot z_i$. If $z_i \leq s_1$, then $p_i \cdot z_i \leq p_i \cdot s_1$. Otherwise (i.e. if $z_i > s_1$), then $p_i \cdot z_i \leq 2 \cdot \Pr[v_i \geq s_1] \cdot s_1$ by regularity of $V_i$ and \cref{lem:regularity}. Therefore,
\[
\myr \leq s_1 \cdot \sum_{i = 1}^n \left(p_i + 2\Pr[v_i > s_1]\right) \leq s_1 \cdot \left(1 + 2 \sum_{i = 1}^n \Pr[v_i > s_1]\right) \leq (1 + 2 \ln 2) s_1.
\]
The last step uses the fact that $\sum_{i = 1}^n \Pr[v_i > s_1] > \ln 2$ would imply
\[
\Pr\left[\max_{i = 1}^n v_i > s_1\right] = 1 - \prod_{i = 1}^n (1 - \Pr[v_i > s_1]) \geq 1 - \exp\left(-\sum_{i = 1}^n \Pr[v_i > s_1]\right) > \frac{1}{2},
\]
which contradicts with the definition of $s_1$ as the median of the distribution of $\max_{i = 1}^n v_i$.
\end{proof}

We note that there is a line of work on approximating revenue using simple mechanisms such as anonymous pricing (i.e., sequential posted pricing with the same price); see e.g. \cite{alaei2019optimal,jin19pricing}. \cref{lem:approximate_myerson} gives a simple bound using $s_1$, where the constant factor should be improvable with techniques from aforementioned work. We now proceed to the proof of our main theorem of this section.

\begin{proof}[Proof of \cref{thm:main}]
Again we let $s_1$ be the median of the distribution of $\max_{i = 1}^n V_i$. Without loss of generality, we assume $V_1$ is the distribution that maximizes $\Pr_{v_i \sim V_i}[v_i \geq s_1]$ over $i\in[n]$. Furthermore, we define $u_2$ to be the median of the distribution of $\max_{i = 2}^n V_i$. It is straightforward from the definitions that $s_1\geq u_2$. Most of our proof works with a generic $\alpha\geq e$ that gives $k=\ln\alpha \in \mathbb{Z}_{+}$, and we pick the appropriate $\alpha$ to get the guarantees in terms of $\tau$ in the theorem statement at the end of our proof.

We consider the following three cases: (1) When no single value distribution frequently exceeds $s_1$; (2) When some value distribution frequently exceeds $s_1$, and $s_1$ and $u_2$ are relatively close; (3) When some value distribution frequently exceeds $s_1$, and $s_1 \gg u_2$.

\textbf{Case (1):} If $\Pr[v_1 \geq s_1] \leq \frac{1}{4}$, we will show $\rvcg(\alpha, k)$ is a constant approximation to $\myr$. The intuition is that the second highest value is at least $s_1$ with constant probability; and if there are at least two values exceeding $s_1$, we will gain a revenue of at least $\frac{s_1}{2}$, since our mechanism uses a threshold $\lambda_1 = 1$ with probability $w_1 = \frac{1}{2}$. Formally, we know
\[
\rvcg(\alpha, k) \geq \frac{s_1}{2} \cdot \Pr\left[v^{(2)} \geq s_1\right].
\]
We will show that $\Pr\left[v^{(2)} \geq s_1\right]$ is at least a constant in this case. Enumerating which two values are at least $s_1$, we have
\begin{align*}
\Pr\left[v^{(2)} \geq s_1\right] &\geq \sum_{1 \leq i < j \leq n} \Pr[v_i \geq s_1] \cdot \Pr[v_j \geq s_1] \cdot \Pr\left[\max_{t \neq i, j} v_t < s_1\right]\\
&\geq \sum_{1 \leq i < j \leq n} \Pr[v_i \geq s_1] \cdot \Pr[v_j \geq s_1] \cdot \Pr\left[\max_{1 \leq t \leq n} v_t < s_1\right]\\
&= \frac{1}{2} \cdot \sum_{1 \leq i < j \leq n} \Pr[v_i \geq s_1] \cdot \Pr[v_j \geq s_1]\\
&= \frac{1}{4} \cdot \sum_{1 \leq i \leq n} \Pr[v_i \geq s_1] \cdot \sum_{j \neq i} \Pr[v_j \geq s_1],
\end{align*}
where the second last step uses the definition of $s_1$ being the median of the (continuous) distribution of $v^{(1)}$.
Further, since $\sum_{1 \leq j \leq n} \Pr[v_j \geq s_1] \geq \Pr[v^{(1)} \geq s_1] \geq \frac{1}{2}$ and $\Pr[v_i \geq s_1] \leq \Pr[v_1 \geq s_1] \leq \frac{1}{4}$ by our assumption in this case, we have $\sum_{j \neq i} \Pr[v_j \geq s_1] \geq \frac{1}{2}-\frac{1}{4}=\frac{1}{4}$. Therefore,
\[
\Pr\left[v^{(2)} \geq s_1\right] \geq \frac{1}{4} \cdot \frac{1}{2} \cdot \frac{1}{4} = \frac{1}{32},
\]
and thus
\[
\rvcg(\alpha, k) \geq \frac{s_1}{64} \geq \frac{1}{64 (1 + 2 \ln 2)} \cdot \myr.
\]

\noindent\textbf{Case (2):} If  $\Pr[v_1 \geq s_1] > \frac{1}{4}$ and $s_1 \leq 12 \alpha u_2$, we will show $\rvcg(\alpha, k)$ is an $\Omega\left(\frac{1}{k}\right)$-approximation to $\myr$. Note that by our choice of $k=\ln\alpha$, the consecutive thresholds in our mechanism are separated by a constant factor of $\alpha^{\nicefrac{1}{k}}=e$. 

Observe that if $v_1 \geq s_1$ and $\max_{i = 2}^n v_i \geq u_2$, then $v^{(2)}$ (i.e. the second highest value) is at least $u_2$ and $v^{(1)}\geq s_1$. In this case, the thresholds in our mechanisms are $v^{(2)}, e\cdot v^{(2)}, e^2\cdot v^{(2)}, \ldots, \alpha v^{(2)}$. When $\frac{v^{(1)}}{v^{(2)}}\leq \alpha$, there exists a threshold setting the price to be at least $\frac{v^{(1)}}{e}\geq \frac{s_1}{e}$, and when $\frac{v^{(1)}}{v^{(2)}}> \alpha$, the largest threshold will set the price to be $\alpha v^{(2)}\geq \alpha u_2\geq s_1/12$. The mechanism will pick each threshold with probability (at least) $\frac{1}{2k}$, and thus $\rvcg(\alpha, k)$ gets at least a revenue of $\frac{1}{2k} \cdot \frac{s_1}{12}$ by just looking at when $v_1 \geq s_1$ and $\max_{i = 2}^n v_i \geq u_2$. This allows us to show
\begin{align*}
\rvcg(\alpha, k) &\geq \frac{1}{2k} \cdot \frac{s_1}{12} \cdot \Pr\left[v_1 \geq s_1 \land \max_{i = 2}^n v_i \geq u_2\right]\\
&= \frac{1}{2k} \cdot \frac{s_1}{12} \cdot \Pr\left[v_1 \geq s_1\right] \cdot \Pr\left[\max_{i = 2}^n v_i \geq u_2\right]\\
&\geq \frac{1}{2k} \cdot \frac{s_1}{12} \cdot \frac{1}{4} \cdot \frac{1}{2} = \frac{s_1}{192k} \geq \frac{1}{192(1 + 2 \ln 2)k} \cdot \myr.
\end{align*}

\noindent\textbf{Case (3):} Otherwise (i.e., $\Pr[v_1 \geq s_1] > \frac{1}{4}$ and $s_1 > 12 \alpha u_2$), we will show $\rvcg(\alpha, k) = \Omega(\frac{\alpha}{k\ln\alpha}) \cdot \spa$. Notice that if $v_1 \geq s_1$ and $u_2 \leq \max_{2 \leq i \leq n} v_i \leq \frac{s_1}{\alpha}$, then $v^{(1)}=v_1$ and $v^{(2)}=\max_{2 \leq i \leq n} v_i\leq \frac{v^{(1)}}{\alpha}$, which means our mechanism will have revenue at least $\frac{1}{2k} \cdot \alpha \cdot \max_{2 \leq i \leq n} v_i$ by using the threshold $\alpha\cdot v^{(2)}$ with probability $\frac{1}{2k}$. Therefore,
\begin{align*}
\rvcg(\alpha, k) &\geq \E\left[\frac{1}{2k} \cdot \alpha \cdot \max_{i = 2}^n v_i \ \middle| \ v_1 \geq s_1 \land u_2 \leq \max_{i = 2}^n v_i \leq \frac{s_1}{\alpha}\right] \cdot \Pr\left[v_1 \geq s_1 \land u_2 \leq \max_{i = 2}^n v_i \leq \frac{s_1}{\alpha}\right]\\
&= \frac{\alpha}{2k} \cdot \E\left[\max_{i = 2}^n v_i \ \middle| \ u_2 \leq \max_{i = 2}^n v_i \leq \frac{s_1}{\alpha}\right] \cdot \Pr\left[v_1 \geq s_1\right] \cdot \Pr\left[u_2 \leq \max_{i = 2}^n v_i \leq \frac{s_1}{\alpha}\right]\\
&\geq \frac{\alpha}{2k} \cdot \E\left[\max_{i = 2}^n v_i \ \middle| \ u_2 \leq \max_{i = 2}^n v_i \leq \frac{s_1}{\alpha}\right] \cdot \frac{1}{4} \cdot \left(\Pr\left[\max_{i = 2}^n v_i \geq u_2\right] - \Pr\left[\max_{i = 2}^n v_i > \frac{s_1}{\alpha}\right]\right)\\
&\geq \frac{\alpha}{2k} \cdot \E\left[\max_{i = 2}^n v_i \ \middle| \ u_2 \leq \max_{i = 2}^n v_i \leq \frac{s_1}{\alpha}\right] \cdot \frac{1}{4} \cdot \left(\frac{1}{2} - \frac{1 + 2 \ln 2}{12}\right)\\
&= \frac{(5 - 2 \ln 2) \alpha}{96k} \cdot \E\left[\max_{i = 2}^n v_i \ \middle| \ u_2 \leq \max_{i = 2}^n v_i \leq \frac{s_1}{\alpha}\right].
\end{align*}
The second-last step uses the fact that $\Pr\left[\max_{i = 2}^n v_i \geq 12u_2\right] \leq \frac{1 + 2 \ln 2}{12}$; If that doesn't hold, sequential posted pricing at $12 u_2$ on buyers $2, 3, \ldots, n$ would give revenue more than $(1 + 2 \ln 2) u_2$, which contradicts \cref{lem:approximate_myerson}.

Next, we give an upper bound of similar form for $\spa$. Let $\sec_{i = j}^n v_i$ denote the second largest value from the set $\{v_j, v_{j+1}, \ldots, v_n\}$. We have
\begin{align*}
\spa &= \E\left[\sec_{i = 1}^n v_i\right] = \int_{0}^{+\infty} \Pr\left[\sec_{i = 1}^n v_i \geq t\right] \d t.
\end{align*}
Evaluating the integral separately at $t \in [0, u_2)$, $t \in \left[u_2, \nicefrac{s_1}{\alpha}\right)$, $t \in \left[\nicefrac{s_1}{\alpha}, s_1\right)$, and $t \in \left[s_1, +\infty\right)$, we get
\begin{align*}
\spa &\leq \int_{0}^{u_2} \d t + \int_{u_2}^{\nicefrac{s_1}{\alpha}} \Pr\left[\max_{i = 2}^n v_i \geq t\right] \d t + \int_{\nicefrac{s_1}{\alpha}}^{s_1} \Pr\left[\max_{i = 2}^n v_i \geq t\right] \d t + \\
&\quad \quad \int_{s_1}^{+\infty} \left(\Pr\left[v_1 \geq t \land \max_{i = 2}^n v_i \geq t\right] + \Pr\left[\sec_{i = 2}^n v_i \geq t\right]\right) \d t\\
&= u_2 + \int_{u_2}^{\nicefrac{s_1}{\alpha}} \left(\Pr\left[t \leq \max_{i = 2}^n v_i \leq \nicefrac{s_1}{\alpha}\right] + \Pr\left[\max_{i = 2}^n v_i > \nicefrac{s_1}{\alpha}\right]\right) \d t + \int_{\nicefrac{s_1}{\alpha}}^{s_1} \Pr\left[\max_{i = 2}^n v_i \geq t\right] \d t + \\
&\quad \quad \int_{s_1}^{+\infty} \left(\Pr\left[v_1 \geq t \land \max_{i = 2}^n v_i \geq t\right] + \Pr\left[\sec_{i = 2}^n v_i \geq t\right]\right) \d t\\
&\leq u_2 + \int_{u_2}^{\nicefrac{s_1}{\alpha}} \left(\Pr\left[t \leq \max_{i = 2}^n v_i \leq \nicefrac{s_1}{\alpha}\right] + \frac{(1 + 2 \ln 2) u_2 \alpha}{s_1}\right) \d t + \int_{\nicefrac{s_1}{\alpha}}^{s_1} \frac{(1 + 2 \ln 2) u_2}{t} \d t + \\
&\quad \quad \int_{s_1}^{+\infty} \left(\frac{s_1}{t} \cdot \frac{(1 + 2 \ln 2) u_2}{t} + \Pr\left[\sec_{i = 2}^n v_i \geq t\right]\right) \d t.
\end{align*}
In the last step we used the fact that $\Pr[v_1 \geq t] \leq \frac{s_1}{t}$, since pricing at $t$ for Buyer $1$ should not give revenue more than $s_1$ by \cref{lem:regularity}; and $\Pr\left[\max_{i = 2}^n v_i \geq t\right] \leq \frac{(1 + 2 \ln 2) u_2}{t}$, since sequential posted pricing at $t$ for Buyer $2, 3, \ldots, n$ should not give revenue more than $(1 + 2 \ln 2) u_2$, which is an upper bound for the optimal revenue given by \cref{lem:approximate_myerson} applied to (only) buyers $2,\ldots,n$. To continue with our derivation, we have
\begin{align*}
\spa &\leq u_2 + \int_{u_2}^{\nicefrac{s_1}{\alpha}} \left(\Pr\left[t \leq \max_{i = 2}^n v_i \leq \nicefrac{s_1}{\alpha}\right] + \frac{(1 + 2 \ln 2) u_2 \alpha}{s_1}\right) \d t + \int_{\nicefrac{s_1}{\alpha}}^{s_1} \frac{(1 + 2 \ln 2) u_2}{t} \d t + \\
&\quad \quad \int_{s_1}^{+\infty} \left(\frac{(1 + 2 \ln 2) s_1u_2}{t^2} + \Pr\left[\sec_{i = 2}^n v_i \geq t\right]\right) \d t\\
&\leq u_2 + \int_{u_2}^{\nicefrac{s_1}{\alpha}} \left(\Pr\left[t \leq \max_{i = 2}^n v_i \leq \nicefrac{s_1}{\alpha} \ \middle| \ u_2 \leq \max_{i = 2}^n v_i \leq \nicefrac{s_1}{\alpha}\right]\right) \d t + (1 + 2 \ln 2) u_2 + (1 + 2 \ln 2) u_2 \ln \alpha + \\
&\quad \quad \frac{(1 + 2 \ln 2) s_1u_2}{s_1} + \int_{s_1}^{+\infty} \Pr\left[\sec_{i = 2}^n v_i \geq t\right] \d t\\
&\leq u_2 + \int_{u_2}^{\nicefrac{s_1}{\alpha}} \left(\Pr\left[t \leq \max_{i = 2}^n v_i \leq \nicefrac{s_1}{\alpha} \ \middle| \ u_2 \leq \max_{i = 2}^n v_i \leq \nicefrac{s_1}{\alpha}\right]\right) \d t + \\
&\quad \quad \int_{s_1}^{+\infty} \Pr\left[\sec_{i = 2}^n v_i \geq t\right] \d t + (1 + 2 \ln 2) (2 + \ln \alpha) u_2\\
&= \E\left[\max_{i = 2}^n v_i \ \middle| \ u_2 \leq \max_{i = 2}^n v_i \leq \nicefrac{s_1}{\alpha}\right] + \int_{s_1}^{+\infty} \Pr\left[\sec_{i = 2}^n v_i \geq t\right] \d t + ((3 + 4 \ln 2) + (1 + 2 \ln 2) \ln \alpha) u_2.
\end{align*}
Finally, notice that
\begin{align*}
\int_{s_1}^{+\infty} \Pr\left[\sec_{i = 2}^n v_i \geq t\right] \d t &\leq \int_{s_1}^{+\infty} \Pr\left[\max_{i = 2}^n v_i \geq t\right]^2 \d t\\
&\leq \int_{s_1}^{+\infty} \frac{((1 + 2 \ln 2) u_2)^2}{t^2} \d t\\
&= \frac{((1 + 2 \ln 2) u_2)^2}{s_1} < \frac{(1 + 2 \ln 2)^2}{12} \cdot u_2,
\end{align*}
where we once again used \cref{lem:approximate_myerson}.
Therefore,
\begin{align*}
\spa &\leq \E\left[\max_{i = 2}^n v_i \ \middle| \ u_2 \leq \max_{i = 2}^n v_i \leq \nicefrac{s_1}{\alpha}\right] + \frac{(1 + 2 \ln 2)^2}{12} \cdot u_2 + ((3 + 4 \ln 2) + (1 + 2 \ln 2) \ln \alpha) u_2\\
&\leq \E\left[\max_{i = 2}^n v_i \ \middle| \ u_2 \leq \max_{i = 2}^n v_i \leq \nicefrac{s_1}{\alpha}\right] \cdot \left(1 + \frac{(1 + 2 \ln 2)^2}{12} + (3 + 4 \ln 2) + (1 + 2 \ln 2) \ln \alpha\right).
\end{align*}
Thus,
\[
\frac{\rvcg(\alpha, k)}{\spa} \geq \frac{\frac{(5 - 2 \ln 2) \alpha}{96k}}{\left(1 + \frac{(1 + 2 \ln 2)^2}{12} + (3 + 4 \ln 2) + (1 + 2 \ln 2) \ln \alpha\right)} \geq \frac{1}{256} \cdot \frac{\alpha}{k \ln \alpha},
\]
when $\alpha \geq e$.

Taking $k=\ln\alpha$ and thus $\alpha^{1/k}=e$, we get that in all cases, $\rvcg$ is either at least $\frac{\alpha}{256\ln^2\alpha}\cdot \spa$ (i.e. Case (3)) or $\Omega(1/\ln\alpha)\cdot\myr$ (i.e. Cases (1),(2)). 

To get the guarantees in the theorem and corollary statements, when $\tau\in (1, e]$, it suffices to take $\alpha=e^{12}$ to get $\frac{\alpha}{256\ln^2\alpha} > e \geq \tau$ on the SPA side, and $\Omega(1/\ln\alpha)$ is $\Omega(1)$ on the Myerson side. When $\tau>e$, it suffices to take $\alpha \in [e^{11},e^{12}]\cdot\tau\ln^2\tau$ (with $k\in\mathbb{Z}_{+}$). It is easy to check $\frac{\alpha}{256\ln^2\alpha} \geq \tau$ on the SPA side, and $\Omega(1/\ln\alpha)$ is $\Omega(1/
\ln\tau)$ on the Myerson side.
\end{proof}

\subsection{Lower Bounds}
We complement our positive result with \cref{thm:lower}, which states that the guarantee in \cref{thm:main} is tight up to constants, even for $2$ buyers with deterministic value distributions.
\begin{theorem}
For $2$ buyers with deterministic values and any $\tau \geq 3$, there is no prior-independent DSIC mechanism $\mec$ with revenue satisfying
\[
\mec \geq \min\left(\frac{2.5}{\ln \tau} \cdot \myr, \ \tau \cdot \spa\right).
\]
\label{thm:lower}
\end{theorem}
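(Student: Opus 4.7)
The plan is to restrict attention to the one-parameter family of deterministic instances $I_\alpha = (v_1, v_2) = (1, \alpha)$ for $\alpha \in [1, M]$, where $c := 2.5/\ln\tau$ and $M := \tau/c = \tau\ln\tau/2.5$ (so $M > 1$ since $\tau \geq 3 > e$). On each $I_\alpha$ we have $\spa(I_\alpha) = 1$ and $\myr(I_\alpha) = \alpha$, so the target $\min\bigl(\tfrac{2.5}{\ln\tau}\myr,\,\tau\spa\bigr)$ simplifies to $c\alpha$ throughout this range.

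Fix any prior-independent DSIC mechanism $\mec$ and let $A(\alpha)$ denote the expected probability that buyer 2 wins on $I_\alpha$. DSIC implies $A$ is non-decreasing on $[0,\infty)$, and Myerson's payment formula gives buyer 2's expected payment as $p_2(\alpha) = \alpha A(\alpha) - \int_0^\alpha A(s) \d s$. Since buyer 1's value is $1$, individual rationality forces her expected payment to be at most $1$, whence the total expected revenue satisfies $R(\alpha) := \mec(I_\alpha) \leq p_2(\alpha) + 1$. If $\mec$ met the target on every $I_\alpha$ with $\alpha \in [1, M]$, we would therefore have $p_2(\alpha) \geq c\alpha - 1$ throughout.

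The core step is to contradict this by integrating against the kernel $1/\alpha^2$. Writing $G(\alpha) := \int_0^\alpha A(s) \d s$, the identity $\tfrac{\d}{\d\alpha}\bigl(G(\alpha)/\alpha\bigr) = (\alpha A(\alpha) - G(\alpha))/\alpha^2 = p_2(\alpha)/\alpha^2$ yields
\[
\int_1^M \frac{p_2(\alpha)}{\alpha^2} \d\alpha = \frac{G(M)}{M} - G(1).
\]
Since $A$ is non-decreasing and takes values in $[0,1]$, the right-hand side is maximized (via the extreme-point decomposition into threshold functions $\mathbb{1}[\alpha \geq t]$) at $A = \mathbb{1}[\alpha \geq 1]$, giving the upper bound $\int_1^M p_2(\alpha)/\alpha^2 \d\alpha \leq 1 - 1/M$. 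In the other direction, the assumed inequality $p_2(\alpha) \geq c\alpha - 1$ gives
\[
\int_1^M \frac{p_2(\alpha)}{\alpha^2} \d\alpha \geq \int_1^M \frac{c\alpha - 1}{\alpha^2} \d\alpha = c\ln M - \Bigl(1 - \tfrac{1}{M}\Bigr).
\]
Combining these yields the necessary condition $c\ln M \leq 2 - 2/M$.

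The last step is to substitute $M = \tau\ln\tau/2.5$, which gives $c\ln M = 2.5 + \tfrac{2.5}{\ln\tau}\ln\bigl(\tfrac{\ln\tau}{2.5}\bigr)$, and to verify that $c\ln M > 2 - 2/M$ for every $\tau \geq 3$, yielding a contradiction; running the argument in reverse exhibits an explicit $\alpha^\ast \in [1,M]$ at which $R(\alpha^\ast) < c\alpha^\ast = \min\bigl(\tfrac{2.5}{\ln\tau}\myr(I_{\alpha^\ast}),\,\tau\spa(I_{\alpha^\ast})\bigr)$. I expect the main obstacle to be this final numerical check in the small-$\tau$ regime near $\tau = 3$, where both $c\ln M$ and $2 - 2/M$ are small (at $\tau=3$ they are roughly $0.63$ and $0.48$) and the slack is tight; handling it cleanly will likely require a short monotonicity argument for $\tau \mapsto c\ln M - (2 - 2/M)$ together with a direct computation at the boundary $\tau = 3$, after which the regime $\tau \geq e^{2.5}$ is free since there $c\ln M \geq 2.5 > 2 \geq 2 - 2/M$.
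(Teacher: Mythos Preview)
Your proposal is correct and rests on the same underlying idea as the paper's proof: both are Yao-style arguments that test the mechanism against the family of point-mass instances $(1,\alpha)$ weighted by (a truncation of) the equal-revenue density $\propto 1/\alpha^{2}$. The execution differs, however. The paper takes a discrete grid $\alpha\in\{2,4,\ldots,2^{m}\}$ with weights $p_k\propto 2^{-k}$, observes directly that on the resulting \emph{mixed} instance any mechanism earns at most $1$ from buyer~1 and at most $2$ from buyer~2 (so $\sum_k p_k\,\mec_k\le 3$), and then checks that $\sum_k p_k\cdot\tfrac{2.5}{\ln\tau}\cdot 2^{k}=\tfrac{2.5m}{\ln\tau}\cdot\tfrac{2^{m}}{2^{m}-1}>3$ for $m$ chosen via $2^{m}+1\le\tau<2^{m+1}+1$. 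Your continuous version replaces this with Myerson's payment identity, the telescoping integral $\int_{1}^{M} p_2(\alpha)/\alpha^{2}\,\d\alpha = G(M)/M - G(1)$, and an extreme-point argument over monotone allocation rules to cap the right-hand side at $1-1/M$. This is more systematic and makes the allocation rule's role explicit, but it leaves you with the tail-end numerical verification $c\ln M>2-2/M$ that is genuinely tight near $\tau=3$; the paper's discrete bound of $3$ versus $\tfrac{2.5m}{\ln\tau}\cdot\tfrac{2^{m}}{2^{m}-1}$ already separates with slack at $m=1$ (covering $\tau\in[3,5)$), so no delicate boundary analysis is needed there.
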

\begin{proof}
Suppose for the purpose of contradiction that such a mechanism $\mec$ exists. Fix a parameter $m\in \mathbb{Z}_{+}$ to be decided later based on $\tau$, and consider the family of $m$ examples each with two buyers whose values are $v_1 = 1$ and $v_2(k) = 2^k$ for $k \in \{1, 2, \ldots, m\}$. Let $\mec_k$, $\myr_k$ and $\spa_k$ be the respective revenues from the point-distribution instance $(v_1, v_2(k))$. Treating each example as its own point-distribution, we need to satisfy the guarantee $\mec_k \geq \min\left(\frac{2.5}{\ln \tau} \cdot \myr_k, \ \tau \cdot \spa_k\right)$ for $(v_1, v_2(k))$ for all $k \in \{1, 2, \ldots, m\}$ simultaneously. Note that $\myr_k=2^k$ and $\spa_k = 1$ for all $k$.

We argue that $\mec$ cannot achieve this guarantee, by first proving $\sum_{k = 1}^m p_k \cdot \mec_k \leq 3$ where $p_k := \frac{1}{2^k} \cdot \frac{2^m}{2^m - 1}$ gives a probability distribution over $k$. To show this we consider a single instance with distribution over a support on the aforementioned $m$ examples. Suppose the instance $(v_1, v_2(k))$ appears with probability $p_k$ for each $k$, then $\mec$ can get a revenue of at most $3$ (at most $1$ from Buyer 1 and at most $2$ from Buyer 2 at any price) on this randomized instance. Choose $m$ so that $2^m + 1 \leq \tau < 2^{m + 1} + 1$, and thus $\mec_k \leq 2^m < \tau \cdot \spa_k$. This means for $\mec$ to exist we must have $\mec_k \geq \frac{2.5}{\ln \tau} \cdot \myr_k = \frac{2.5}{\ln \tau} \cdot 2^k$ for all $k$, then
\begin{align*}
\sum_{k = 1}^m p_k \cdot \mec_k &\geq \frac{2.5}{\ln \tau} \cdot \sum_{k = 1}^m p_k \cdot \myr_k\\
&= \frac{2.5}{\ln \tau} \cdot \sum_{k = 1}^m \frac{1}{2^k} \cdot \frac{2^m}{2^m - 1} \cdot 2^k\\
&= \frac{2.5m}{\ln \tau} \cdot \frac{2^m}{2^m - 1}
\ > 3.
\end{align*}
The contradiction implies the theorem statement.
\end{proof}

We note that constructions in the work of \cite{allouah2020prior} can give lower bounds when $\tau$ is close to $1$. We present \cref{thm:ab_lower1} and \cref{thm:ab_lower2} in addition to our lower bound of \cref{thm:lower}.

\cref{thm:ab_lower1} follows from the same instance as in \cite{allouah2020prior}, where $\spa$ is the optimal prior-independent auction there. Therefore, their lower bound that no prior-independent mechanism (with a technical assumption) can beat $0.715 \cdot \myr$ in the instance implies \cref{thm:ab_lower1}.

\begin{theorem} [\cite{allouah2020prior}]
\label{thm:ab_lower1}
Even for two i.i.d.\@ MHR distributions, for any $\varepsilon > 0$, no prior-independent DSIC mechanism $\mec$ with finite Arzel\`a variation (see \cite[Section 8]{allouah2020prior}) can always satisfy $\mec \geq \min(0.715 \cdot \myr, (1 + \varepsilon) \cdot \spa)$ .
\end{theorem}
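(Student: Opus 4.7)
The plan is to reduce the claim directly to the hard-instance lower bound of \citet{allouah2020prior}. First I would import as a black box the following consequence of their analysis for two-bidder i.i.d.\ MHR distributions: there is such an instance $D^{*}$ (possibly realized as the limit of a sequence) on which (a) the second-price auction is the optimal prior-independent mechanism, satisfying $\spa(D^{*}) = 0.715 \cdot \myr(D^{*})$, and (b) every prior-independent mechanism $\mec$ obeys $\mec(D^{*}) \leq 0.715 \cdot \myr(D^{*})$. Property (a) is their ``optimality of SPA'' statement, and (b) is their matching lower bound against all PI mechanisms instantiated at the same worst-case instance.

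With (a) and (b) in hand, the rest of the proof is a two-line evaluation. Suppose for contradiction that some PI mechanism $\mec$ satisfies $\mec \geq \min(0.715 \cdot \myr,\,(1+\varepsilon) \cdot \spa)$ on every two-bidder i.i.d.\ MHR instance. Evaluating at $D^{*}$, property (a) yields $(1+\varepsilon) \cdot \spa(D^{*}) = (1+\varepsilon) \cdot 0.715 \cdot \myr(D^{*}) > 0.715 \cdot \myr(D^{*})$, so the minimum on the right collapses to $0.715 \cdot \myr(D^{*})$. The hypothesis then requires $\mec(D^{*}) \geq 0.715 \cdot \myr(D^{*})$, which directly contradicts (b) whenever the AB inequality is strict at $D^{*}$ for mechanisms other than $\spa$.

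The main obstacle is not algebraic but the precise form in which the cited lower bound is extracted: I need from \citet{allouah2020prior} the \emph{uniform-over-mechanisms} upper bound (b) at a single distribution, rather than the weaker per-mechanism bound $\inf_{D} \mec(D)/\myr(D) \leq 0.715$ that a raw competitive-ratio statement would give. Their hard-instance construction for the MHR class supplies exactly this. In the event that (a) and (b) are only available in a limit, the same argument applies along a sequence $D_{k} \to D^{*}$: the slack supplied by the $(1+\varepsilon)$ factor keeps the minimum on the Myerson branch for all sufficiently large $k$, so the strict violation of the Myerson-approximation branch inherited from (b) at $D_{k}$ remains the binding contradiction.
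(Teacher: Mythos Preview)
Your proposal is correct and follows essentially the same route as the paper. The paper's argument is a two-sentence pointer to the Allouah--Besbes hard instance, noting that $\spa$ is the optimal prior-independent mechanism there and that their lower bound (no PI mechanism beats $0.715\cdot\myr$ on that instance) immediately yields the theorem; your black-boxed properties (a) and (b) are exactly these two facts, and your collapse-of-the-minimum step is the same implication spelled out. You are in fact more careful than the paper, explicitly flagging that what is needed from \citet{allouah2020prior} is a uniform-over-mechanisms bound at a single instance (or along a sequence), rather than merely a per-mechanism competitive-ratio statement, and handling the limit case---the paper leaves all of this implicit.
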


For \cref{thm:ab_lower2}, we again look at the same family of instances as in \cite{allouah2020prior}. However, here we need to take into account the performance of $\spa$ and balance the parameters. They show no prior-independent mechanism can beat $0.556 \cdot \myr$, and later \cite{DBLP:conf/focs/HartlineJL20} give a stronger, tight impossibility result that no prior-independent mechanism can beat $0.524 \cdot \myr$.
We show no prior-independent mechanism can beat $\min(0.572 \cdot \myr, (1 + \varepsilon) \cdot \spa)$.

\begin{theorem}
\label{thm:ab_lower2}
Even for two i.i.d.\@ regular distributions, for any $\varepsilon > 0$, no prior-independent DSIC mechanism $\mec$ with finite Arzel\`a variation can always satisfy $\mec \geq \min((\frac{4}{7} + \varepsilon) \cdot \myr, (1 + \varepsilon) \cdot \spa)$.
\end{theorem}
\begin{proof}
Following the work of \cite{allouah2020prior}, we look at the regular distribution:
\[
F_a(v) =
  \begin{cases}
    1 - \frac{1}{v + 1}       & \quad \text{if } v < a\\
    1  & \quad \text{if } v \geq a
  \end{cases}.
\]
On a pair of these distributions, $\spa = \int_{0}^a \left(\Pr_{v \sim F_a} [v \geq t]\right)^2 \mathrm{d} t = \int_{0}^a \left(\frac{1}{t + 1}\right)^2 \mathrm{d} t = \frac{a}{a + 1}$, and $\myr = a \cdot (1 - (\frac{a}{a + 1})^2) = \frac{a(2a + 1)}{(a + 1)^2}$. \cite{allouah2020prior} show that for any $\varepsilon > 0$, no prior-independent mechanism $\mec$ can guarantee
\begin{align*}
\frac{\mec}{\myr} \geq (1 + \varepsilon) &\cdot \max\bigg(\frac{1}{2 - q}, \\
&\max_{\gamma > 1} \left(\frac{q}{2 - q} + 2 \cdot \frac{\gamma}{\gamma - 1} \cdot \frac{1}{1 - q} \cdot \frac{1}{2 - q} \cdot \left(\frac{1 - q}{1 - q + \gamma q} - \frac{1}{\gamma - 1} \ln \frac{\gamma}{1 - q + \gamma q}\right)\right)\bigg)
\end{align*}
where $q = \frac{1}{1 + a}$.

Let $a = 3$. We have $q = \frac{1}{4}$, and it is impossible to beat $\frac{4}{7} \myr = \spa$.
\end{proof}

\section{Selling Multiple Identical Items}
\label{sec:multi-item}
\label{SEC:MULTI-ITEM}
In this section, we consider a generalization where the seller is selling $k$ identical items. The buyers are unit-demand, meaning each of them can only receive at most one copy. Without loss of generality, we assume $\log_2 k$ is an integer to simply our exposition. (In general, we can reduce $k$ to the nearest power of $2$, and the loss of constant factors are absorbed in the theorem statements in this section.)

\paragraph{Multi-Item Threshold Mechanisms} We generalize the threshold mechanisms in \cref{sec:threshold} to the multi-item case. Such a mechanism with capacity parameter $t$ still uses a finite number of thresholds $\{\lambda_1, \lambda_2, \ldots, \lambda_m\}$, where $\lambda_i$ happens with probability $w_i$ with $\sum_{i = 1}^m w_i = 1$. For a value profile $(v_1, v_2, \ldots, v_n)$, the mechanism generates a random threshold $\lambda_i$ according to the probabilities $(w_1, w_2, \ldots, w_m)$. It then looks at values $v^{(j)}$ and $v^{(t + 1)}$, separately for each $j \in [t]$. If $v^{(j)} \geq \lambda_i \cdot v^{(t + 1)}$, one item is allocated to buyer $j$ for a price of $\lambda_i \cdot v^{(t + 1)}$. Otherwise, no item is allocated to buyer $j$.

\begin{definition}[Multi-Item Geometric-Threshold Mechanisms]
The mechanism $\mgtm(\tau)$ runs $\rvcg_{2^j}(\alpha, c)$ for $j \in \{0, 1, \ldots, \log_2 k\}$ each with probability $\frac{1}{1 + \log_2 k}$, where $\alpha = O\left(k^2\tau\ln^2(k\tau)\right)$ and $c = \ln\alpha \in \mathbb{Z}_{+}$. $\rvcg_{2^j}(\alpha, c)$ is the multi-item threshold mechanism with capacity parameter $t = 2^j$ and $m + 1$ thresholds: $\lambda_1 = 1$ and $w_1 = \frac{1}{2}$; $\lambda_i = \alpha^{\frac{i - 1}{k}}$ and $w_i = \frac{1}{2m}$ for $i = 2, \ldots, m + 1$.
\end{definition}
\begin{theorem}
For $n$ buyers and $k$ items, the mechanism $\mgtm(\tau)$ satisfies
\[
\mgtm(\tau) \geq \min\left(\Omega\Big(\frac{1}{\ln(k\tau)} \cdot \frac{1}{\ln k}\Big) \cdot \myr, \ \tau \cdot \vcg\right).
\]
\label{thm:multi_main}
\end{theorem}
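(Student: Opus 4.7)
The plan is to extend the three-case analysis of \cref{thm:main} by stratifying over the ``capacity levels'' $t=2^j$, and to exploit the fact that $\mgtm(\tau)$ runs each sub-mechanism $\rvcg_{2^j}(\alpha,c)$ with probability $\tfrac{1}{1+\log_2 k}$. This outer randomization is the source of one of the two losses in the bound (the $1/\ln k$ factor).

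First, I would establish a multi-item analogue of \cref{lem:approximate_myerson}. Let $s_t$ denote the median of the $t$-th highest value $v^{(t)}$. A sequential posted-price argument at price $s_t$ sells at least $t$ items with constant probability, giving $\myr \geq \Omega(t\cdot s_t)$ for every $t$. For the upper bound, a virtual-welfare argument analogous to \cref{lem:approximate_myerson}, combined with regularity and the observation that the dyadic quantiles $\{1/2^j\}$ geometrically cover the relevant range, should yield
\[
\myr \;\leq\; O(\log k)\cdot \max_{j\in\{0,\ldots,\log_2 k\}} 2^j\cdot s_{2^j}.
\]
This is where the first $\ln k$ loss appears.

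Second, fix the level $j^\ast$ achieving the maximum above, and apply the three-case analysis of \cref{thm:main} to the sub-mechanism $\rvcg_{2^{j^\ast}}(\alpha,c)$, with $s_1$ replaced by $s_{2^{j^\ast}}$ and $u_2$ replaced by the median $u$ of $v^{(2^{j^\ast}+1)}$. In the analogue of Case (1), when no single distribution dominates the $(2^{j^\ast})$-th order statistic, the threshold $\lambda=1$ sells $2^{j^\ast}$ items at per-item price $\Omega(s_{2^{j^\ast}})$ with constant probability. In Case (2), when $s_{2^{j^\ast}}\leq 12\alpha u$, one of the geometric thresholds prices each item at $\Theta(s_{2^{j^\ast}})$, giving $\Omega(1/c)\cdot 2^{j^\ast}\cdot s_{2^{j^\ast}}$. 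In Case (3), when $s_{2^{j^\ast}}\gg\alpha u$, the largest threshold extracts revenue on the order of $\Omega(\alpha/c)\cdot 2^{j^\ast}\cdot u$, which I then compare against $\vcg = \E\bigl[k\cdot v^{(k+1)}\bigr]$ via a tail-integration analogue of the $\spa$ bound in \cref{thm:main}. Factoring in the outer $\tfrac{1}{1+\log_2 k}$ and choosing $\alpha = O(k^2\tau\ln^2(k\tau))$ with $c=\ln\alpha$ delivers the stated bound.

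The main obstacle I anticipate is in Case (3): controlling $\vcg$ in terms of $u = \text{median}\bigl(v^{(2^{j^\ast}+1)}\bigr)$ for a generic level $j^\ast$, since $\vcg$ depends on $v^{(k+1)}$ while our threshold mechanism is calibrated at the $(2^{j^\ast}+1)$-th order statistic. Carrying through an integral bound on $\E\bigl[v^{(k+1)}\bigr]$ requires a subtler decomposition that accounts for the gap between $2^{j^\ast}+1$ and $k+1$ and handles the cross-contributions from many bidders clustering near the posted level. This is what forces the extra $k^2$ inside $\alpha$ and produces the additional $\ln k$ inside $\ln(k\tau)$ in the final bound, relative to the single-item result.
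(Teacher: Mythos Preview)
Your outline is close, but as written it loses an extra $\ln k$ on the Myerson side and overcomplicates the $\vcg$ side.

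On the Myerson side: fixing a single maximizing level $j^\ast$ and combining $\myr\le O(\log k)\cdot 2^{j^\ast}s_{2^{j^\ast}}$ with the outer $1/(1+\log_2 k)$ yields only $\Omega\bigl(1/(\ln(k\tau)\,\ln^2 k)\bigr)\cdot\myr$, not the stated $\Omega\bigl(1/(\ln(k\tau)\,\ln k)\bigr)\cdot\myr$; you are paying for $\log k$ twice. The paper avoids the double loss by proving the sharper \emph{sum} bound $\myr\le O(1)\cdot\sum_{j}2^j s_{2^j}$ (\cref{lem:multi_myr}, with no $\log k$) and by invoking the per-level lemma (\cref{thm:multi_core}) at \emph{every} $j$, not just at $j^\ast$. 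The final argument is then a global dichotomy: either some level already gives $\rvcg_{2^j}\ge k\tau\cdot\vcg$, hence $\mgtm\ge\tau\cdot\vcg$ after the outer averaging; or every level satisfies $\rvcg_{2^j}\ge\Omega(1/\ln(k\tau))\cdot 2^j s_{2^j}$, so that $\mgtm=\tfrac{1}{1+\log_2 k}\sum_j\rvcg_{2^j}\ge \Omega(1/\ln(k\tau))\cdot\tfrac{1}{1+\log_2 k}\sum_j 2^j s_{2^j}$ and the sum controls $\myr$ directly. Only one $\ln k$ is spent.

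On the $\vcg$ side: the obstacle you anticipate in Case~(3)---bridging a threshold calibrated at $v^{(t+1)}$ to $\vcg$, which depends on $v^{(k+1)}$---does not arise in the paper's route. \cref{thm:multi_core} compares $\rvcg_t$ only to the \emph{local} quantity $t\cdot\E[v^{(t+1)}]$; the conversion to $\vcg$ afterward uses nothing beyond $\E[v^{(t+1)}]\ge\E[v^{(k+1)}]$ and $t\ge 1$, which is exactly where the $k^2$ in $\tau'=k^2\tau$ is absorbed. One further detail: in the per-level analysis the relevant $u$ is not the median of $v^{(t+1)}$ but the median of $\max_{i>t} v_i$ after reordering bidders by $\Pr[v_i\ge s_t]$; this reordering is what makes ``the top $t$ bidders are all $\ge s_t$'' and ``someone among the rest is $\ge u$'' independent events, an independence your definition of $u$ does not provide.
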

\subsection{An Upper Bound for \texorpdfstring{$\myr$}{Myerson}}
To prove \cref{thm:multi_main}, we first give an upper bound for $\myr$. Recall we denote $s_i$ as the median of the distribution of $v^{(i)}$.
\begin{lemma}
$\myr \leq 12 \cdot \sum_{j = 0}^{\log_2 k} 2^j \cdot s_{2^j}$.
\label{lem:multi_myr}
\end{lemma}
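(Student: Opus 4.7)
The plan is to bound $\myr$ using Myerson's characterization of truthful revenue and then partition buyers according to where their equilibrium quantiles land relative to the medians $\{s_{2^j}\}_{j=0}^{\log_2 k}$. For each buyer $i$ in the optimal $k$-unit mechanism, let $p_i$ be the probability that $i$ wins and let $z_i$ be defined by $\Pr[v_i \geq z_i] = p_i$. Since the optimal mechanism is monotone, buyer $i$ faces a random threshold, and the concavity of the revenue curve $R_i(q) = q \cdot F_i^{-1}(1-q)$ (which follows from regularity), combined with Jensen's inequality applied to this random threshold, implies that the expected payment from $i$ is at most $R_i(p_i) = p_i z_i$. Summing gives $\myr \leq \sum_i p_i z_i$, and trivially $\sum_i p_i \leq k$.

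I would then partition the buyers by $z_i$ into $B_{-1} = \{i : z_i > s_1\}$, $B_j = \{i : s_{2^{j+1}} < z_i \leq s_{2^j}\}$ for $j = 0, 1, \ldots, \log_2 k - 1$, and $B_{\log_2 k} = \{i : z_i \leq s_k\}$, and bound each bucket's contribution to $\sum_i p_i z_i$ separately. For $B_{-1}$ I would adapt the argument in the proof of \cref{lem:approximate_myerson}: using concavity of $R_i$ together with the fact that $s_1$ is at least the median of $V_i$ for every $i$ (which follows from $\Pr[\max_i v_i \geq s_1] = 1/2$), a chord estimate yields $p_i z_i \leq 2 s_1 \Pr[v_i \geq s_1]$; summing and using $\sum_i \Pr[v_i \geq s_1] \leq \ln 2$ (as established in that proof) gives $\leq 2 \ln 2 \cdot s_1$. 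For $B_{\log_2 k}$, the trivial bound $p_i z_i \leq p_i s_k$ combined with $\sum p_i \leq k$ gives $\leq k \cdot s_k$. For each middle bucket $B_j$, $z_i \leq s_{2^j}$ gives $p_i z_i \leq p_i s_{2^j}$, while $z_i > s_{2^{j+1}}$ gives $p_i \leq \Pr[v_i \geq s_{2^{j+1}}]$; thus $\sum_{i \in B_j} p_i z_i \leq s_{2^j} \cdot \E[T_{2^{j+1}}]$, where $T_m := |\{i : v_i \geq s_m\}|$.

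The main obstacle is then to show $\E[T_m] = O(m)$. By the definition of $s_m$ as the median of $v^{(m)}$, we have $\Pr[T_m \geq m] = 1/2$, so $m$ is a median of $T_m$; note, however, that a median condition alone does not a priori bound the mean. The key observation is that $T_m$ is a sum of independent Bernoullis (a Poisson binomial), so by Cantelli's one-sided inequality (or equivalently the Jogdeo-Samuels theorem on Poisson binomials) the mean and any median of $T_m$ differ by $O(1)$, giving $\E[T_m] \leq 2m$ once $m \geq 1$. Each $B_j$ therefore contributes at most $4 \cdot 2^j s_{2^j}$, and summing the three pieces, $\myr \leq 2\ln 2 \cdot s_1 + \sum_{j=0}^{\log_2 k - 1} 4 \cdot 2^j s_{2^j} + k \cdot s_k \leq 12 \sum_{j=0}^{\log_2 k} 2^j s_{2^j}$, after absorbing the constants in front of each $2^j s_{2^j}$ into $12$. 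The subtle step is exactly this concentration bound: a naive Markov argument yields only the lower bound $\E[T_m] \geq m/2$, and one really needs a second-moment or monotone-coupling concentration statement for the matching upper bound.
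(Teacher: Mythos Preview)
Your argument is correct, but it proceeds along a genuinely different route from the paper's. The paper buckets the buyers by where the \emph{median of $V_i$} falls relative to $\{s_{2^j}\}$, then relaxes the feasibility constraint so that every buyer in a high bucket $B_j$ (for $j<\log_2 k$) may be allocated freely; this lets it upper-bound each such buyer's contribution by $s_{2^j}$ via \cref{lem:regularity}, after which the work is to bound the bucket sizes $n_j=|B_j|$ by $12\cdot 2^j$ and $\sum_{i\in B_{\log_2 k}}\Pr[v_i\ge s_k]$ by $3k$, both via Chernoff (\cref{lem:multi_myr_1}, \cref{lem:multi_myr_2}). You instead bucket by the mechanism-dependent quantity $z_i$, bound $p_i z_i$ pointwise within each bucket, and reduce everything to controlling $\E[T_m]$ for the Poisson binomial $T_m=|\{i:v_i\ge s_m\}|$. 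Your route avoids the relaxation step entirely and in fact yields a tighter constant (about $2\ln 2+4\approx 5.4$ in place of $12$); the paper's route has the aesthetic advantage that its buckets are defined independently of the optimal mechanism.

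Two small remarks on the concentration step. First, Jogdeo--Samuels is a statement about the binomial distribution, not the Poisson binomial, so that citation does not literally apply. Second, you do not actually need the mean and median to differ by $O(1)$: Cantelli's inequality together with $\mathrm{Var}(T_m)\le \E[T_m]$ already gives $\E[T_m]\le m+\sqrt{\E[T_m]}$, hence $\E[T_m]\le m+\sqrt{m}+O(1)\le 2m$, which is exactly what you use. (Equivalently one can run the Chernoff argument the paper uses for \cref{lem:multi_myr_2}, which gives $\E[T_m]\le 3m$ for $m\ge 2$; the $m=1$ case is the $\sum_i\Pr[v_i\ge s_1]\le \ln 2$ bound from \cref{lem:approximate_myerson}.)
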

\begin{proof}
The statement clearly follows from \cref{lem:approximate_myerson} when $k = 1$, and thus we assume $k \geq 2$ in the rest of the proof.

Suppose we have the following $(\log_2 k) + 1$ buckets $[0, s_{2^{\log_2 k}}], (s_{2^{\log_2 k}}, s_{2^{(\log_2 k) - 1}}], \ldots, (s_2, s_1]$. Bucket $j$ is the one whose largest value is $s_{2^j}$, where $j \in \{0, 1, \ldots, \log_2 k\}$. Put each value distribution $V_i$ into one of the buckets depending on which range the median of $V_i$ is in. (Note that the median of $V_i$ cannot exceed $s_1$, since $s_1$ is the median of the distribution of $v^{(1)}$.) Let $B_j$ be the set of indices of distributions in Bucket $j$, and let $n_j := |B_j|$.

Now we describe a relaxation of the allocation constraint. Instead of allocating to at most $k$ buyers, we allow allocation to any buyer in $B_0 \cup B_1 \cup \cdots \cup B_{(\log_2 k) - 1}$ arbitrarily, and require that we allocate to at most $k$ buyers in $B_{\log_2 k}$. The maximum revenue under these new relaxed constraints is an upper bound for $\myr$.

The revenue we can get from buyers in $B_0 \cup B_1 \cup \cdots \cup B_{(\log_2 k) - 1}$ is at most
\[
\sum_{j = 0}^{(\log_2 k) - 1} n_j \cdot s_{2^j}
\]
by \cref{lem:regularity}.

The virtual welfare, which is equal to the revenue, that we can get from buyers in $B_{\log_2 k}$ can be upper-bounded similar to the proof of \cref{lem:approximate_myerson}. Let $p_i$ be the allocation probability to buyer $i$, and let $w_i$ be the $(1 - p_i)$-th quantile of $V_i$, i.e., $\Pr[v_i \geq w_i] = p_i$. The virtual welfare from buyer $i$ is at most $p_i \cdot w_i$. If $w_i \leq s_k$, then $p_i \cdot w_i \leq p_i \cdot s_k$. Otherwise, $p_i \cdot w_i \leq 2 \cdot \Pr[v_i \geq s_k] \cdot s_k$ by \cref{lem:regularity}. To summarize, we have
\begin{align*}
\myr &\leq \left(\sum_{j = 0}^{(\log_2 k) - 1} n_j \cdot s_{2^j}\right) + \left(\sum_{i \in B_{\log_2 k}} p_i \cdot s_k\right) + \left(\sum_{i \in B_{\log_2 k}} 2 \cdot \Pr[v_i \geq s_k] \cdot s_k\right)\\
&\leq \left(\sum_{j = 0}^{(\log_2 k) - 1} n_j \cdot s_{2^j}\right) + k \cdot s_k + \left(\sum_{i \in B_{\log_2 k}} 2 \cdot \Pr[v_i \geq s_k] \cdot s_k\right).
\end{align*}
Further, using \cref{lem:multi_myr_1} and \cref{lem:multi_myr_2} that we are about to prove, we get
\begin{align*}
\myr &\leq \left(\sum_{j = 0}^{(\log_2 k) - 1} 12 \cdot 2^j \cdot s_{2^j}\right) + k \cdot s_k + 6 \cdot k \cdot s_k\\
&\leq 12 \cdot \sum_{j = 0}^{\log_2 k} 2^j \cdot s_{2^j},
\end{align*}
and thus finishes the proof.
\end{proof}
\begin{lemma}
For $j < \log_2 k$, we have $n_j \leq 12 \cdot 2^j$.
\label{lem:multi_myr_1}
\end{lemma}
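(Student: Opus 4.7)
I will prove the bound by contradiction, combining the defining property of $s_{2^{j+1}}$ as the median of $v^{(2^{j+1})}$ with a Chernoff lower-tail bound.

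First I would observe that the medians $s_1 \geq s_2 \geq \cdots$ are non-increasing (since the $i$-th order statistic is stochastically decreasing in $i$), so any $V_i$ placed in bucket $B_{j'}$ for some $j' \leq j$ has its own median strictly above $s_{2^{j+1}}$. Consequently, for every $i \in S := B_0 \cup B_1 \cup \cdots \cup B_j$, monotonicity of the CDF gives $\Pr[v_i \geq s_{2^{j+1}}] \geq \tfrac{1}{2}$. Writing $X_i := \mathbb{1}[v_i \geq s_{2^{j+1}}]$, these are independent Bernoullis with $\E[X_i] \geq 1/2$ for $i \in S$.

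Next I would extract the key probabilistic inequality: the event $\{v^{(2^{j+1})} < s_{2^{j+1}}\}$ is precisely $\{\sum_{i=1}^n X_i < 2^{j+1}\}$, and by the definition of $s_{2^{j+1}}$ as the median (of a continuous variable) this event has probability $\tfrac{1}{2}$; since $Y := \sum_{i \in S} X_i \leq \sum_{i=1}^n X_i$, it follows that $\Pr[Y < 2^{j+1}] \geq \tfrac{1}{2}$. Now suppose for contradiction that $n_j > 12 \cdot 2^j$; then $|S| \geq n_j$, so $\mu := \E[Y] \geq |S|/2 > 3 \cdot 2^{j+1}$. Applying the multiplicative Chernoff bound $\Pr[Y \leq (1 - \delta)\mu] \leq \exp(-\delta^2 \mu / 2)$ with $\delta = 1 - 2^{j+1}/\mu > 2/3$ yields
\[
\Pr[Y < 2^{j+1}] \leq \exp\!\Big(-\tfrac{2}{3} \cdot 2^{j+1}\Big) \leq e^{-4/3} < \tfrac{1}{2},
\]
contradicting the inequality just established.

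The only delicate point is translating the median-of-order-statistic definition into a probability inequality for the restricted sum $Y$, which is immediate from the set inclusion $\{\sum_{i=1}^n X_i < 2^{j+1}\} \subseteq \{Y < 2^{j+1}\}$; the rest is routine. The constant $12$ is not tight---a stochastic-dominance comparison with $\mathrm{Bin}(|S|, \tfrac{1}{2})$ actually gives the sharper bound $|S| \leq 2 \cdot 2^{j+1} - 1$---but this cruder form is clean and suffices for the $O(\cdot)$ guarantee used in \cref{lem:multi_myr}.
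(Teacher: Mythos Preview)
Your argument is correct and follows essentially the same route as the paper's proof: a contradiction via the lower-tail Chernoff bound, showing that too many bidders with median above $s_{2^{j+1}}$ would force $\Pr[v^{(2^{j+1})}\geq s_{2^{j+1}}]>\tfrac12$. The only cosmetic difference is that the paper works with the single bucket $B_j$ rather than your union $S=B_0\cup\cdots\cup B_j$, and applies Chernoff with $\delta=\tfrac12$ (yielding $1-e^{-n_j/16}$) instead of your $\delta>\tfrac23$; neither change affects the structure or the constant.
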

\begin{proof}
Let $X_i = 1$ denote the event of $v_i \geq s_{2^{j + 1}}$ and $X_i = 0$ otherwise. $X_i$'s are mutually independent for $i \in B_j$. Let $\mu := \E\left[\sum_{i \in B_j} X_i\right]$. Using the Chernoff bound, we have
\[
\Pr\left[\sum_{i \in B_j} X_i \geq \frac{\mu}{2}\right] \geq 1 - e^{-\nicefrac{\mu}{8}}.
\]
Note that for each $i \in B_j$, $\Pr[v_i \geq s_{2^{j + 1}}] \geq \frac{1}{2}$. Therefore, $\mu \geq \frac{n_j}{2}$ and thus
\[
\Pr\left[\sum_{i \in B_j} X_i \geq \frac{n_j}{4}\right] \geq 1 - e^{-\nicefrac{n_j}{16}}.
\]
If $n_j > 12 \cdot 2^j$, then
\[
\Pr\left[\sum_{i \in B_j} X_i \geq 2^{j + 1}\right] \geq 1 - e^{-\nicefrac{12}{16}} > \frac{1}{2},
\]
contradicting with the definition of $s_{2^{j + 1}}$, as too frequently $2^{j + 1}$ values exceed $s_{2^{j + 1}}$.
\end{proof}

\begin{lemma}
If $k \geq 2$, then $\sum_{i \in B_{\log_2 k}} \Pr[v_i \geq s_k] \leq 3k$.
\label{lem:multi_myr_2}
\end{lemma}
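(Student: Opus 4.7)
The plan is to mirror the Chernoff-based argument of \cref{lem:multi_myr_1}, but bounding a sum of probabilities from above rather than a count of distributions. Let $X_i = 1$ if $v_i \geq s_k$ and $X_i = 0$ otherwise, for each $i \in B_{\log_2 k}$. Set $S = \sum_{i \in B_{\log_2 k}} X_i$ and $\mu = \E[S] = \sum_{i \in B_{\log_2 k}} \Pr[v_i \geq s_k]$. Since the median of $V_i$ lies in $[0, s_k]$ for every $i \in B_{\log_2 k}$ by the definition of the bucket, we have $\Pr[X_i = 1] \leq \tfrac{1}{2}$, and the $X_i$'s are mutually independent.

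The first step is to compare $S$ with the $k$-th order statistic of all values. Whenever $S \geq k$, at least $k$ of the $v_i$'s are at least $s_k$, which forces $v^{(k)} \geq s_k$. Hence
\[
\Pr[S \geq k] \leq \Pr\!\left[v^{(k)} \geq s_k\right] = \tfrac{1}{2},
\]
where the equality uses the definition of $s_k$ as the median of the (continuous) distribution of $v^{(k)}$.

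The second step is a contradiction argument. Suppose $\mu > 3k$. Applying the standard multiplicative Chernoff lower-tail bound for independent Bernoulli variables, with deviation parameter $\delta = 1 - k/\mu > \tfrac{2}{3}$, gives
\[
\Pr[S < k] = \Pr\!\left[S < (1-\delta)\mu\right] \leq \exp\!\left(-\tfrac{\delta^2 \mu}{2}\right) \leq \exp\!\left(-\tfrac{2\mu}{9}\right).
\]
Since $k \geq 2$, we have $\mu > 6$, so $\Pr[S < k] \leq e^{-4/3} < \tfrac{1}{2}$, which implies $\Pr[S \geq k] > \tfrac{1}{2}$, contradicting the bound from the first step. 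Therefore $\mu \leq 3k$, which is the desired inequality.

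The main obstacle, such as it is, is calibrating the constant in $3k$ so that the Chernoff slack already beats $\tfrac{1}{2}$ at the smallest relevant value $k = 2$. The constant $3$ gives $2\mu/9 \geq 4/3 > \ln 2$ at $k = 2$, so the crude Chernoff bound suffices and no sharper concentration inequality is needed. This also explains the hypothesis $k \geq 2$ in the lemma: the $k = 1$ case is outside the regime where the contradiction bites, but it is not needed since the $k = 1$ branch of \cref{lem:multi_myr} is handled separately via \cref{lem:approximate_myerson}.
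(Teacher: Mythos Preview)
Your proof is correct and follows essentially the same Chernoff-based contradiction argument as the paper's own proof. The only cosmetic difference is the choice of deviation parameter: the paper applies the lower-tail bound at $\delta=\tfrac{1}{2}$ to get $\Pr[S\geq \mu/2]\geq 1-e^{-\mu/8}$ and then uses $\mu/2>k$, whereas you take $\delta=1-k/\mu>\tfrac{2}{3}$ directly; both routes yield $\Pr[S\geq k]>\tfrac{1}{2}$ for $k\geq 2$ and contradict the definition of $s_k$. (Your observation that $\Pr[X_i=1]\leq\tfrac{1}{2}$ from the bucket definition is true but unused.)
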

\begin{proof}
Let $X_i = 1$ denote the event of $v_i \geq s_k$ and $X_i = 0$ otherwise. $X_i$'s are mutually independent for $i \in B_{\log_2 k}$. Let $\mu := \E\left[\sum_{i \in B_{\log_2 k}} X_i\right]$. Using the Chernoff bound, we have
\[
\Pr\left[\sum_{i \in B_{\log_2 k}} X_i \geq \frac{\mu}{2}\right] \geq 1 - e^{-\nicefrac{\mu}{8}}.
\]
If $\mu > 3k$, then 
\[
\Pr\left[\sum_{i \in B_{\log_2 k}} X_i \geq k\right] \geq 1 - e^{-\nicefrac{3k}{8}} > \frac{1}{2},
\]
contradicting with the definition of $s_k$, which is the median of the distribution of the $k$-th maximum value.
\end{proof}

\subsection{Extending the Single-Item Case}
\begin{lemma}
For $n$ buyers, $k$ items and parameter $\tau'$, there exists some $\alpha = O(\tau'\ln^2\tau')$ and $c = \ln\alpha \in \mathbb{Z}_{+}$ such that for each $t \in [k]$,
\[
\rvcg_t(\alpha, c) \geq \min\left(\Omega\Big(\frac{1}{\ln\tau'}\Big) \cdot t \cdot s_{t}, \ \tau' \cdot t \cdot \E[v^{(t + 1)}]\right).
\]
\label{thm:multi_core}
\end{lemma}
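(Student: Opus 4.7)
The plan is to lift the three-case analysis from the proof of Theorem~\ref{thm:main} to the multi-item setting, replacing $s_1$, $u_2$, and $\spa = \E[v^{(2)}]$ by $s_t$, $u_{t+1}$ (defined below), and $t \cdot \E[v^{(t+1)}]$, respectively. The operating observation is that $\rvcg_t(\alpha, c)$ with threshold $\lambda_i$ earns revenue
\[
|\{j \in [t] : v^{(j)} \geq \lambda_i v^{(t+1)}\}| \cdot \lambda_i v^{(t+1)};
\]
in particular, whenever $v^{(t)} \geq \lambda_i v^{(t+1)}$, all $t$ items sell at price $\lambda_i v^{(t+1)}$, and the default threshold $\lambda_1 = 1$ (probability $1/2$) always sells $t$ items at price $v^{(t+1)}$.

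\textbf{Setup.} Define $q_i := \Pr[v_i \geq s_t]$, reorder bidders so that bidder~$1$ attains $q_1 = \max_i q_i$, and let $u_{t+1}$ denote the median of the $t$-th maximum of $(v_2, \ldots, v_n)$; note $u_{t+1} \leq s_t$. By independence of $v_1$ from $(v_2, \ldots, v_n)$, the event $\mathcal{E} := \{v_1 \geq s_t \text{ and } t\text{-th max of }(v_2, \ldots, v_n) \geq u_{t+1}\}$ has probability $\geq q_1 / 2$, and on $\mathcal{E}$ we have $v^{(1)} \geq s_t$ while $v^{(2)}, \ldots, v^{(t+1)} \geq u_{t+1}$.

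\textbf{Case A ($q_1$ small).} Emulating Case~1 of Theorem~\ref{thm:main}: since $\sum_i q_i \geq 1/2$ (from $\Pr[v^{(t)} \geq s_t] \geq 1/2$) and $\max_i q_i$ is small, a second-moment-style computation parallel to the $\sum_{i<j} q_i q_j$ bound in the single-item proof gives $\Pr[v^{(t+1)} \geq s_t] = \Omega(1)$, i.e., at least $t+1$ bidders simultaneously exceed $s_t$ with constant probability. Then threshold $\lambda_1 = 1$ sells all $t$ items at price $\geq s_t$, yielding $\E[\rvcg_t] = \Omega(t s_t)$.

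\textbf{Case B ($q_1 = \Omega(1)$, $s_t \leq 12\alpha u_{t+1}$).} On $\mathcal{E}$ the top-$t$ values are each $\geq u_{t+1}$, and $v^{(t)}/v^{(t+1)} \in [1, 12\alpha]$. As in Case~2 of Theorem~\ref{thm:main}, some geometric threshold $\lambda_{i^*} \in \{1, e, \ldots, \alpha\}$ (sampled with probability $\geq 1/(2c)$) satisfies $\lambda_{i^*} v^{(t+1)} \in [v^{(t)}/e, v^{(t)}]$, at which all $t$ items sell for revenue $\geq t v^{(t)}/e$. Leveraging that the top-$t$ values span (a constant fraction of) the interval $[u_{t+1}, v^{(1)}] \supseteq [u_{t+1}, s_t]$ and aggregating over the $c$ geometric scales between $1$ and $\alpha$, we obtain $\E[\rvcg_t] = \Omega(t s_t / c) = \Omega(t s_t / \ln \tau')$.

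\textbf{Case C ($q_1 = \Omega(1)$, $s_t > 12\alpha u_{t+1}$).} Here $v^{(t+1)}$ is much smaller than $v^{(t)}$. Using regularity-based tail bounds on the $t$-th maximum of $(v_2, \ldots, v_n)$ via \cref{lem:regularity}, the sub-event $\mathcal{E}_C := \mathcal{E} \cap \{t\text{-th max of }(v_2, \ldots, v_n) \leq s_t/\alpha\}$ still has probability $\Omega(1)$, and on $\mathcal{E}_C$ the ratio $v^{(t)}/v^{(t+1)}$ exceeds $\alpha$, so the largest threshold $\lambda_{c+1} = \alpha$ sells all $t$ items at price $\alpha v^{(t+1)}$, contributing $\Omega(t \alpha v^{(t+1)}/c)$ to the expected revenue. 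A layered-integral upper bound on $\E[v^{(t+1)}]$ over the intervals $[0, u_{t+1}]$, $[u_{t+1}, s_t/\alpha]$, $[s_t/\alpha, s_t]$, $[s_t, \infty)$, combining regularity bounds on each individual $V_i$ with order-statistic bounds on $(v_2, \ldots, v_n)$ as in Case~3 of Theorem~\ref{thm:main}, gives $\E[v^{(t+1)}] = O(\ln \alpha) \cdot u_{t+1}$. Combining, $\E[\rvcg_t]/(t \E[v^{(t+1)}]) = \Omega(\alpha/(c \ln \alpha)) = \Omega(\tau')$ for the choice $\alpha = \Theta(\tau' \ln^2 \tau')$ with $c = \ln \alpha$.

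\textbf{Main obstacle.} The crucial difficulty beyond the single-item proof is correctly tracking the factor of $t$. In the single-item case, each successful threshold sells exactly one unit, whereas in the multi-item case the number of units sold at threshold $\lambda$ depends delicately on the joint distribution of the top $t+1$ order statistics. The most delicate point is Case~C: it is tempting to count only bidder~$1$'s item sold at the large threshold (losing a factor of $t$), but in fact the large gap between $v^{(t)}$ and $v^{(t+1)}$ on $\mathcal{E}_C$ guarantees that all $t$ top bidders' items sell simultaneously, which is exactly what enables the $t$ to cancel between $\rvcg_t$ and $t \cdot \E[v^{(t+1)}]$. A related issue in Case~B is that one must argue many of the top-$t$ values simultaneously exceed the threshold producing the best per-unit price, which again relies on the multi-item order-statistic structure captured by $u_{t+1}$.
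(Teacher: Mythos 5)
Your high-level three-case structure mirrors the paper, but you have made a choice of conditioning event and case split that actually breaks the argument in the cases that matter most, and you have implicitly assumed away the hardest technical step.

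\textbf{Case split and conditioning event.} You condition on the event $\mathcal{E}=\{v_1 \geq s_t\}\cap\{t\text{-th max of }(v_2,\ldots,v_n)\geq u_{t+1}\}$ and split cases according to whether $q_1=\max_i\Pr[v_i\geq s_t]$ is a constant. The paper instead reorders bidders by $q_i$, splits on whether $\Pr[v_1\geq s_t\wedge\cdots\wedge v_t\geq s_t]\leq 1/4$, and in the ``dense'' cases conditions on the stronger event $\{v_i\geq s_t\ \forall i\in[t]\}\cap\{\max_{j>t}v_j\geq u_{t+1}\}$, with $u_{t+1}$ the median of $\max_{j>t}v_j$ (the max of the tail, not the $t$-th max of bidders $2,\ldots,n$). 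These are not cosmetic differences. If $q_1=1$ but $q_2,\ldots,q_t$ are tiny, you are in your Cases B/C, yet the paper would be in Case~1 --- and critically, your arguments for B/C do not go through, because the stronger event is the only thing that forces $v^{(t)}\geq s_t$.

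\textbf{The factor of $t$ does not come for free in your Case C.} You assert that ``the large gap between $v^{(t)}$ and $v^{(t+1)}$ on $\mathcal{E}_C$ guarantees that all $t$ top bidders' items sell simultaneously.'' This is false under your conditioning. On $\mathcal{E}_C$ you only know $v_1\geq s_t$ and that the $t$-th max of $(v_2,\ldots,v_n)$ lies in $[u_{t+1},s_t/\alpha]$. Since $v_1$ is then among the top $t$, we have $v^{(t+1)}=$ the $t$-th max of $(v_2,\ldots,v_n)$ and $v^{(t)}=$ the $(t{-}1)$-th max of $(v_2,\ldots,v_n)$. Nothing forces $v^{(t)}/v^{(t+1)}\geq\alpha$: e.g., if $v_2,\ldots,v_{t+1}$ all cluster near $u_{t+1}$, the ratio is $\approx 1$ and only \emph{one} item (bidder 1's) clears the large threshold $\alpha v^{(t+1)}$, losing the factor of $t$. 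The same hole appears in Case~B, where you need $v^{(t)}\approx s_t$ to claim $t$ items sell at price $\Theta(s_t)$. The paper's choice to condition on \emph{all} of $v_1,\ldots,v_t$ exceeding $s_t$ is precisely what pins down $v^{(t)}\geq s_t$, and that is why the case split must be on $\Pr[v_1\geq s_t\wedge\cdots\wedge v_t\geq s_t]$ rather than on $\max_i q_i$.

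\textbf{Missing lemmas.} With the correct case split, the small-probability case requires showing $\Pr[\text{at least }t{+}1\text{ values}\geq s_t]=\Omega(1)$ under the hypothesis $\prod_{i\leq t}q_i\leq 1/4$ and $\Pr[\text{at least }t\text{ values}\geq s_t]=1/2$. This is the anti-concentration lemma (\cref{lem:anti-concentration}), which is not a routine second-moment computation --- in particular when $q_t>1/4$ a different ``multiplicative decay of $\E[R_j]$'' argument is needed. In Case~C, bounding the integral of $\Pr[v^{(t+1)}\geq x]$ over $[s_t,\infty)$ requires a tail bound on $\Pr[v_i\geq x\ \forall i\in[t]]$, which is \cref{lem:prob_bound} and is again new relative to the single-item proof. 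You gesture at both but do not supply them; without them the proof is incomplete.

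In short: the architecture is right, but the specific conditioning event and case split you chose cause exactly the factor-of-$t$ loss you flagged as the main obstacle, and the two auxiliary lemmas the paper introduces are the content needed to avoid it.
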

The proof of \cref{thm:multi_core} is conceptually similar to that of \cref{thm:main}, and therefore we postpone it to the appendix. That said, the generalization does require new technical insights. In particular, we prove new technical steps in the form of \cref{lem:anti-concentration} and \cref{lem:prob_bound}.

\cref{lem:anti-concentration} is the core lemma to generalize Case (1) of \cref{thm:main}. By definition of $s_t$, the number of values above $s_t$ is at least $t$ with probability $\frac{1}{2}$. What \cref{lem:anti-concentration} states is under the case condition, the number of values above $s_t$ is at least $t + 1$ with constant probability as well, in the style of an ``anti-concentration'' bound. \cref{lem:prob_bound} is a major step towards generalizing Case (3) of \cref{thm:main}. It is a tail upper bound for the minimum of multiple regular distributions, a counterpart of \cref{lem:regularity} in the single-item case.

\subsection{Completing the Proof}
\begin{proof}[Proof of \cref{thm:multi_main}]
Notice that $\mgtm(\tau)$ uses the mechanism $\rvcg_t(\alpha, c)$ with probability $\Omega\left(\frac{1}{\ln k}\right)$ for each $t=2^j$ where $j\in\{0,1,\ldots,\log_2 k\}$. For any $t$, we can use \cref{thm:multi_core} with $\tau' = k^2 \tau$ to get
\begin{align*}
\rvcg_t(\alpha, c) &\geq \min\left(\Omega\Big(\frac{1}{\ln\tau'}\Big) \cdot t \cdot s_{t}, \ \tau' \cdot t \cdot \E[v^{(t + 1)}]\right)\\
&\geq \min\left(\Omega\Big(\frac{1}{\ln(k^2\tau)}\Big) \cdot t \cdot s_{t}, \ k^2 \tau \cdot t \cdot \E[v^{(k + 1)}]\right)\\
&\geq \min\left(\Omega\Big(\frac{1}{\ln(k\tau)}\Big) \cdot t \cdot s_{t}, \ k \tau \cdot \vcg\right).
\end{align*}
The last step above uses $\vcg = k \cdot \E[v^{(k + 1)}]$.

Now we break into two cases:

\noindent\textbf{Case (1):} Suppose for some $j \in \{0, 1, \ldots, \log_2 k\}$, $\rvcg_{2^j}(\alpha, c) \geq k \tau \cdot \vcg$. Since $\mgtm(\tau)$ uses $\rvcg_{2^j}(\alpha, c)$ with probability $\frac{1}{1 + \log_2 k}$, we know $\mgtm(\tau)$ gets revenue at least $\tau \cdot \vcg$.

\noindent\textbf{Case (2):} Otherwise, we know for every $j \in \{0, 1, \ldots, \log_2 k\}$, $\rvcg_{2^j}(\alpha, c) \geq \Omega\Big(\frac{1}{\ln(k\tau)}\Big) \cdot 2^j \cdot s_{2^j}$. Additionally, \cref{lem:multi_myr} states that
\[
\sum_{j = 0}^{\log_2 k} 2^j \cdot s_{2^j} \geq \frac{1}{12} \cdot \myr.
\]
Therefore,
\begin{align*}
\mgtm(\tau) &\geq \Omega\Big(\frac{1}{\ln(k\tau)}\Big) \cdot \frac{1}{1 + \log_2 k} \sum_{j = 0}^{\log_2 k} 2^j \cdot s_{2^j}\\
&\geq \Omega\Big(\frac{1}{\ln(k\tau)} \cdot \frac{1}{\ln k}\Big) \cdot \myr.
\end{align*}
Combining the two cases gives the theorem statement.
\end{proof}

\section{Conclusions}
In this work, we studied the design of prior-independent auctions for bidders with heterogeneous value distributions. We showed a mechanism that can either achieve a constant fraction of the optimal revenue of any mechanism that knows the value distributions, or beat the revenue of the second-price auction by an arbitrarily large constant factor. Our mechanism has asymptotically optimal trade-off between the constants. We generalized our result to selling multiple identical items and gave a similar message. A possible future direction is to give better bounds and to consider further generalizations.

As another intriguing future direction, one can consider other ways to measure the effectiveness of prior-independent auctions for heterogeneous bidders. What does ``approximately optimal'' mean and how can we ``rank'' different mechanisms? We leave alternative answers to these questions for future work.

\bibliographystyle{alpha}
\bibliography{ref}

\appendix
\section{Missing Proofs in \texorpdfstring{\cref{sec:multi-item}}{Section 5}}
\begin{lemma}[Anti-Concentration]
\label{lem:anti-concentration}
Let $X_1, \ldots, X_n$ be mutually independent Bernoulli random variables. $k < n$ is a positive integer. Suppose we have all three conditions below:
\begin{enumerate}
    \item $\E[X_1] \geq \E[X_2] \geq \cdots \geq \E[X_n]$.
    \item $\Pr\left[\sum_{i = 1}^n X_i \geq k\right] = \frac{1}{2}$.
    \item $\Pr\left[X_1 = X_2 = \cdots = X_k = 1\right] \leq \frac{1}{4}$.
\end{enumerate}
Then,
\[
\Pr\left[\sum_{i = 1}^n X_i \geq k + 1\right] > 0.01.
\]
\end{lemma}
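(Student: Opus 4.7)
Let $L = \sum_{i=1}^k X_i$ and $R = \sum_{i=k+1}^n X_i$, so that $L$ and $R$ are independent and $S := \sum_{i=1}^n X_i = L + R$. Set $a := \Pr[L = k] = \prod_{i=1}^k \E[X_i]$; by hypothesis~3, $a \leq 1/4$. The target bound $\Pr[S \geq k+1] > 0.01$ is equivalent, via hypothesis~2, to $\Pr[S = k] < 0.49$. A first simple observation is that $\{S \geq k\} \subseteq \{L = k\} \cup \{R \geq 1\}$ (because $L < k$ and $R = 0$ force $S < k$), whence $1/2 \leq a + \Pr[R \geq 1]$ and thus $\Pr[R \geq 1] \geq 1/2 - a \geq 1/4$.

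The easy regime is when $a$ is not too small. By independence,
\[
\Pr[S \geq k+1] \;\geq\; \Pr[L = k,\, R \geq 1] \;=\; a \cdot \Pr[R \geq 1] \;\geq\; a\left(\tfrac{1}{2} - a\right),
\]
which exceeds $0.01$ whenever $a \gtrsim 0.03$ by a short elementary calculation (the maximum of $a(1/2 - a)$ on $[0, 1/4]$ is $1/16$). We may therefore focus on the regime $a < 0.03$, where $\Pr[R \geq 1] > 0.47$ and $\Pr[L \leq k-1] > 0.97$.

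In the small-$a$ regime, the plan is to use the finer decomposition
\[
\tfrac{1}{2} - a \;=\; \sum_{\ell=0}^{k-1} \Pr[L = \ell] \cdot \Pr[R \geq k - \ell],
\]
split into the term at $\ell = k-1$ and the tail $\ell \leq k-2$. If the tail sum $\sum_{\ell \leq k-2} \Pr[L = \ell] \Pr[R \geq k - \ell]$ is non-negligible, then every contributing event has $L \leq k-2$ and $R \geq 2$; using log-concavity of the tail of the Bernoulli sum $R$, one can upgrade such events to $\{R \geq k - \ell + 1\}$ (i.e., $S \geq k + 1$) at only a constant-factor loss. Otherwise the $\ell = k-1$ term dominates, forcing $\Pr[L = k-1] > 0.45$, and we bound $\Pr[S \geq k+1] \geq \Pr[L = k-1] \cdot \Pr[R \geq 2]$; here the ordering $p_{k+1} \leq p_k$ combined with condition~3 on $a$ is used to prevent $R$ from being a single high-parameter Bernoulli and to secure a lower bound $\Pr[R \geq 2] \geq c \cdot \Pr[R \geq 1]$ for some absolute constant $c > 0$.

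The main obstacle will be the quantitative control in the small-$a$ regime: bounding $\Pr[R \geq j+1]/\Pr[R \geq j]$ from below uniformly in the instance. Without the ordering $p_{k+1} \leq p_k$ this ratio can be zero (take $R$ to be a single large-parameter Bernoulli), so the technical work is in leveraging the ordering together with condition~3 via a sub-case analysis on $p_k$ (equivalently, on $\E[R]$): when $p_k$ is small, $R$ consists of many small Bernoullis and is Poisson-like with substantial tail; when $p_k$ is larger, condition~3 constrains the top-$k$ configuration so that either the direct bound $a(1/2-a)$ already suffices or the branch $\Pr[L = k-1]\cdot\Pr[R \geq 2]$ can be bounded below directly.
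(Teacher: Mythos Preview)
Your split $S = L + R$ at position $k$ and the large-$a$ case are correct, but the small-$a$ regime remains a sketch, and the central step there has a real gap. Log-concavity of the tail of a sum of independent Bernoullis tells you that the ratio $\Pr[R \geq j+1]/\Pr[R \geq j]$ is \emph{non-increasing} in $j$; this is the wrong direction for the ``upgrade'' you want, since the terms in your tail sum involve $j = k - \ell$ ranging from $2$ up to $k$, and at the top end this ratio can be arbitrarily small. Your promised sub-case analysis on $p_k$ is only announced, not carried out, and it is not clear how it would rescue the argument without essentially redoing the whole small-$a$ case from scratch.

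The paper sidesteps this difficulty by \emph{not} fixing the split at $k$. It instead finds a dynamic split point $j^*$ via an intermediate-value argument, with the case division made precisely on $p_k = \E[X_k]$ (the parameter you correctly identified as pivotal). When $\E[X_k] \leq \tfrac14$, the sequence $Q_j := \Pr\big[\sum_{i \leq j} X_i \geq k\big]$ climbs from $Q_k \leq \tfrac14$ (this is condition~3) to $Q_n = \tfrac12$ in increments of size at most $\E[X_{j+1}] \leq \tfrac14$, so some $j^* \geq k$ has $Q_{j^*} \in [\tfrac18, \tfrac38]$; then $\Pr[S \geq k+1] \geq Q_{j^*}\cdot \Pr\big[\sum_{i>j^*} X_i \geq 1\big] \geq \tfrac18 \cdot \tfrac18$. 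When $\E[X_k] > \tfrac14$, the sequence $R_j := \Pr[X_1 = \cdots = X_j = 1]$ descends from $1$ to $R_k \leq \tfrac14$ with successive ratios at least $\tfrac14$, so some $j^* \leq k$ has $R_{j^*} \in [0.1, 0.4]$, and one similarly obtains $\Pr[S \geq k+1] \geq 0.1 \cdot 0.1$. The point is that by moving the split, both halves contribute a constant probability directly, and no tail-ratio control is ever needed. Your fixed split at $k$ is what forces you into that harder problem.
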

\begin{proof}
We divide our proof into two cases:

\noindent\textbf{Case (1):} Suppose $\E[X_k] \leq \frac{1}{4}$. Define $Q_j = 1$ if $\sum_{i = 1}^j X_i \geq k$ and $Q_j = 0$ otherwise. We notice that
\begin{enumerate}
    \item $\E[Q_k] \leq \frac{1}{4}$.
    \item $\E[Q_n] = \frac{1}{2}$.
    \item $0 \leq \E[Q_{j + 1}] - \E[Q_j] \leq \E[X_{j + 1}] \leq \frac{1}{4}$, for each $j \in \{k, k + 1, \ldots, n - 1\}$.
\end{enumerate}
In other words, the sequence $\E[Q_j]$ is increasing in $j$ but cannot have a jump over $\frac{1}{4}$ in a single step. Therefore, there is some $j^* \in \{k, k + 1, \ldots, n - 1\}$, so that $\E[Q_{j^*}] \in [\frac{1}{8}, \frac{3}{8}]$.

Since $\Pr\left[\sum_{i = 1}^n X_i \geq k\right] = \frac{1}{2}$ and $\Pr\left[\sum_{i = 1}^{j^*} X_i \geq k\right] \leq \frac{3}{8}$, we know
\[
\Pr\left[\sum_{i = j^* + 1}^{n} X_i \geq 1\right] \geq \frac{1}{2} - \frac{3}{8} = \frac{1}{8}.
\]
Further,
\[
\Pr\left[\sum_{i = 1}^n X_i \geq k + 1\right] \geq \Pr\left[\sum_{i = 1}^{j^*} X_i \geq k\right] \cdot \Pr\left[\sum_{i = j^* + 1}^n X_i \geq 1\right] \geq \frac{1}{64}.
\]

\noindent\textbf{Case (2):} Suppose $\E[X_k] > \frac{1}{4}$. Define $R_j = 1$ if $\sum_{i = 1}^j X_i = j$ and $R_j = 0$ otherwise. Notice that
\begin{enumerate}
    \item $\E[R_0] = 1$.
    \item $\E[R_k] \leq \frac{1}{4}$.
    \item $\frac{1}{4} < \E[X_{j + 1}] = \frac{\E[R_{j + 1}]}{\E[R_j]} \leq 1$, for each $j \in \{0, 1, \ldots, k - 1\}$.
\end{enumerate}
In other words, the sequence $\E[R_j]$ is decreasing in $j$ but cannot decrease by a factor more than $\frac{1}{4}$ in a single step. Therefore, there is some $j^* \in \{1, 2, \ldots, k\}$, so that $\E[R_{j^*}] \in [0.1, 0.4]$.

Since $\Pr\left[\sum_{i = 1}^n X_i \geq k\right] = \frac{1}{2}$ and $\Pr\left[\sum_{i = 1}^{j^*} X_i \geq j^*\right] \leq 0.4$, we know
\[
\Pr\left[\sum_{i = j^* + 1}^{n} X_i \geq k - j^* + 1\right] \geq \frac{1}{2} - 0.4 = 0.1.
\]
Therefore,
\[
\Pr\left[\sum_{i = 1}^n X_i \geq k + 1\right] \geq \Pr\left[\sum_{i = 1}^{j^*} X_i \geq j^*\right] \cdot \Pr\left[\sum_{i = j^* + 1}^n X_i \geq k - j^* + 1\right] \geq 0.01. \qedhere
\]
\end{proof}

\begin{proof}[Proof of \cref{thm:multi_core}]
Similar to the proof of \cref{thm:main}, we denote $v^{(t)}$ as the $t$-th largest value among $v_1,\ldots,v_n$, and let $s_t$ be the median of the distribution of $v^{(t)}$. Without loss of generality, we index the bidders in non-increasing order according to $\Pr_{v_i \sim V_i}[v_i \geq s_t]$ for $i\in[n]$. Furthermore, we define $u_{t+1}$ to be the median of the distribution of $\max_{i = t+1}^n v_i$. Again we work with a generic $\alpha\geq e$ that gives $c=\ln\alpha \in \mathbb{Z}_{+}$, and we pick the appropriate $\alpha$ to get the guarantees in terms of $\tau''$ in the theorem statement at the end of our proof.

We can naturally extend the three cases considered in the proof of \cref{thm:main} to multiple items as follows.

\textbf{Case (1):} When $\Pr\left[v_i\geq s_t \forall i\in[1,t]\right] \leq \frac{1}{4}$, we will show $\rvcg(\alpha, k)$ is at least a constant factor of $s_t$. Define $X_i$ as the indicator of event $v_i\geq s_t$ for $i=1,\ldots,n$. Note our ordering of bidders, together with the definition of $s_t$ and the assumption of this specific case, allows us to use \cref{lem:anti-concentration}, which gives
\[
\Pr\left[v^{(t+1)} \geq s_t\right]\geq 0.01,
\]
so with at least a constant probability, we will have at least $t+1$ values that are at least $s_t$. When this happens, we will gain a revenue of at least $\frac{t\cdot s_t}{2}$, since our mechanism uses a threshold $\lambda_1 = 1$ with probability $w_1 = \frac{1}{2}$. Consequently, we know
\[
\rvcg_t(\alpha, c) \geq \frac{t\cdot s_t}{2} \cdot \Pr\left[v^{(t+1)} \geq s_t\right]\geq 0.005\cdot t\cdot s_t.
\]
\noindent\textbf{Case (2):} If  $\Pr\left[v_i\geq s_t \forall i\in[1,t]\right] > \frac{1}{4}$ and $s_t \leq 12 \alpha u_{t+1}$, we will show $\rvcg_t(\alpha, c)$ is at least an $\Omega\left(\frac{1}{k}\right)$ factor of $s_t$. Note that by our choice of $c=\ln\alpha$, the consecutive thresholds in our mechanism are separated by a constant factor of $\alpha^{\nicefrac{1}{k}}=e$. 

Observe that when $v_i \geq s_t$ for all $i\in [1,t]$ and $\max_{j = t+1}^n v_j \geq u_{t+1}$, then $v^{(t+1)}$ is at least $\min(s_t,u_{t+1})$, and thus at least $\frac{s_t}{12\alpha}$ from the assumption in this case. 
When this happens, the thresholds in our mechanisms are $v^{(t+1)}, e\cdot v^{(t+1)}, e^2\cdot v^{(t+1)}, \ldots, \alpha v^{(t+1)}$. When $\frac{v^{(t)}}{v^{(t+1)}}\leq \alpha$, there exists a threshold setting the price to be at least $\frac{v^{(t)}}{e}\geq \frac{s_t}{e}$, and when $\frac{v^{(t)}}{v^{(t+1)}}> \alpha$, the largest threshold will set the price to be $\alpha v^{(t+1)}\geq  s_t/12$. The mechanism will pick each threshold with probability (at least) $\frac{1}{2k}$, and thus $\rvcg_t(\alpha, c)$ gets at least a revenue of $\frac{t}{2k} \cdot \frac{s_t}{12}$ (since we sell $t$ items) by just looking at when $v_i \geq s_t$ for all $i\in[1,t]$ and $\max_{j = t+1}^n v_j \geq u_{t+1}$. This allows us to show
\begin{align*}
\rvcg_t(\alpha, c) &\geq \frac{t}{2k} \cdot \frac{s_t}{12} \cdot \Pr\left[v_i\geq s_t \forall i\in[1,t] \land \max_{j = t+1}^n v_j \geq u_{t+1}\right]\\
&= \frac{t}{2k} \cdot \frac{s_t}{12} \cdot \Pr\left[v_i\geq s_t \forall i\in[1,t]\right] \cdot \Pr\left[\max_{j = t+1}^n v_j \geq u_{t+1}\right]\\
&\geq \frac{t}{2k} \cdot \frac{s_t}{12} \cdot \frac{1}{4} \cdot \frac{1}{2} = \frac{s_t}{192k},
\end{align*}
where in the last step we used the assumption of this case and the definition of $u_{t+1}$ as the median of $\max_{j = t+1}^n v_j$.

\noindent\textbf{Case (3):} Otherwise (i.e., $\Pr\left[v_i\geq s_t \forall i\in[1,t]\right] > \frac{1}{4}$ and $s_t > 12 \alpha u_{t+1}$)
. Notice that if $v_i \geq s_t \forall i\in[1,t]$ and $u_{t+1} \leq \max_{t+1 \leq j \leq n} v_j \leq \frac{s_t}{\alpha}$, then $v^{(t)}\geq s_t$ and $v^{(t+1)}=\max_{t+1 \leq j \leq n} v_j\leq \frac{v^{(t)}}{\alpha}$, which means our mechanism will have revenue at least $\frac{t}{2k} \cdot \alpha \cdot \max_{t+1 \leq j \leq n} v_j$ by using the threshold $\alpha\cdot v^{(t+1)}$ with probability $\frac{1}{2k}$. Therefore,
\begin{align*}
\rvcg_t(\alpha, c) \geq &\E\left[\frac{t}{2k} \cdot \alpha \cdot \max_{j=t+1}^n v_j \ \middle| \ v_i \geq s_t \forall i\in[1,t] \land u_{t+1} \leq \max_{j = t+1}^n v_j \leq \frac{s_t}{\alpha}\right] \\
&\cdot \Pr\left[v_i \geq s_t\forall i\in[1,t] \land u_{t+1} \leq \max_{j = t+1}^n v_j \leq \frac{s_t}{\alpha}\right]\\
=& \frac{\alpha t}{2k} \cdot \E\left[ \max_{j=t+1}^n v_j\ \middle| \  u_{t+1} \leq \max_{j = t+1}^n v_j \leq \frac{s_t}{\alpha}\right] \cdot \Pr\left[v_i \geq s_t\forall i\in[1,t]\right] \cdot \Pr\left[u_{t+1} \leq \max_{j = t+1}^n v_j \leq \frac{s_t}{\alpha}\right]\\
\geq & \frac{\alpha t}{2k} \cdot \E\left[ \max_{j=t+1}^n v_j\ \middle| \ u_{t+1} \leq \max_{j = t+1}^n v_j \leq \frac{s_t}{\alpha}\right]  \cdot \frac{1}{4} \cdot \left(\Pr\left[\max_{j = t+1}^n v_j \geq u_{t+1}\right] - \Pr\left[\max_{j = t+1}^n v_j > \frac{s_t}{\alpha}\right]\right)\\
\geq & \frac{\alpha t}{2k} \cdot \E\left[ \max_{j=t+1}^n v_j\ \middle| \  u_{t+1} \leq \max_{j = t+1}^n v_j \leq \frac{s_t}{\alpha}\right]  \cdot \frac{1}{4} \cdot \left(\frac{1}{2} - \frac{1 + 2 \ln 2}{12}\right)\\
=& \frac{(5 - 2 \ln 2) \alpha t}{96k} \cdot \E\left[ \max_{j=t+1}^n v_j\ \middle| \ \max_{j = t+1}^n v_j \in[u_{t+1} , \frac{s_t}{\alpha}]\right] .
\end{align*}
The third-last step uses the assumption of this case, and the second-last step uses the definition of $u_{t+1}$ as the median, and also fact that $\Pr\left[\max_{j=t+1}^n v_j \geq 12u_{t+1}\right] \leq \frac{1 + 2 \ln 2}{12}$; If that doesn't hold, in the case of selling a single item to buyers $t+1,\ldots,n$, the sequential posted pricing at $12 u_{t+1}$ would give revenue more than $(1 + 2 \ln 2) u_{t+1}$, which contradicts \cref{lem:approximate_myerson}.

Next, we give an upper bound of similar form for $\E\left[v^{(t+1)}\right]$. Again, we use $\sec_{i = j}^n v_i$ to denote the second largest value from the set $\{v_j, v_{j+1}, \ldots, v_n\}$. We have
\begin{align*}
\E\left[v^{(t+1)}\right] = \int_{0}^{+\infty} \Pr\left[v^{(t+1)} \geq x\right] \d x.
\end{align*}
Evaluating the integral separately at $x \in [0, u_{t+1})$, $x \in \left[u_{t+1}, \nicefrac{s_t}{\alpha}\right)$, $x \in \left[\nicefrac{s_t}{\alpha}, s_t\right)$, and $x \in \left[s_t, +\infty\right)$, we get
\begin{align*}
\E\left[v^{(t+1)}\right] &\leq \int_{0}^{u_{t+1}} \d x + \int_{u_{t+1}}^{\nicefrac{s_t}{\alpha}} \Pr\left[\max_{j = t+1}^n v_j \geq x\right] \d x + \int_{\nicefrac{s_t}{\alpha}}^{s_t} \Pr\left[\max_{j = t+1}^n v_j \geq x\right] \d x + \int_{s_t}^{+\infty} \Pr\left[v^{(t+1)} \geq x \right] \d x\\
&\leq u_{t+1} + \int_{u_{t+1}}^{\nicefrac{s_t}{\alpha}} \left(\Pr\left[\max_{j = t+1}^n v_j \in[x, \nicefrac{s_t}{\alpha}]\right] + \Pr\left[\max_{j = t+1}^n v_j > \nicefrac{s_t}{\alpha}\right]\right) \d x + \int_{\nicefrac{s_t}{\alpha}}^{s_t} \Pr\left[\max_{j = t+1}^n v_j \geq x\right] \d x + \\
&\quad \quad \int_{s_t}^{+\infty} \left(\Pr\left[v_i \geq x\forall i\in[1,t] \land \max_{j = t+1}^n v_j \geq x\right] + \Pr\left[\sec_{j = t+1}^n v_j \geq x\right]\right) \d x\\
&\leq u_{t+1} + \int_{u_{t+1}}^{\nicefrac{s_t}{\alpha}} \left(\Pr\left[ \max_{j = t+1}^n v_j \in[x,\nicefrac{s_t}{\alpha}]\right] + \frac{(1 + 2 \ln 2) u_{t+1} \alpha}{s_t}\right) \d x + \int_{\nicefrac{s_t}{\alpha}}^{s_t} \frac{(1 + 2 \ln 2) u_{t+1}}{x} \d x + \\
&\quad \quad \int_{s_t}^{+\infty} \left(\frac{3s_t}{x} \cdot \frac{(1 + 2 \ln 2) u_{t+1}}{x} + \Pr\left[\sec_{j=t+1}^n v_j \geq x\right]\right) \d x.
\end{align*}
The first step uses the fact that we always have $v^{(t+1)}\leq \max_{j = t+1}^n v_j$. In the last step we used \cref{lem:prob_bound} and $\frac{8}{e}\leq 3$; and $\Pr\left[\max_{j = t+1}^n v_j \geq x\right] \leq \frac{(1 + 2 \ln 2) u_{t+1}}{x}$, since sequential posted pricing at $x$ for Buyer $t+1, \ldots, n$ should not give revenue more than $(1 + 2 \ln 2) u_{t+1}$, which is an upper bound for the optimal revenue given by \cref{lem:approximate_myerson} applied to selling one item to buyers $t+1,\ldots,n$. To continue with our derivation, we have
\begin{align*}
\E\left[v^{(t+1)}\right] &\leq u_{t+1} + \int_{u_{t+1}}^{\nicefrac{s_t}{\alpha}} \left(\Pr\left[ \max_{j = t+1}^n v_j \in[x,\nicefrac{s_t}{\alpha}]\right] + \frac{(1 + 2 \ln 2) u_{t+1} \alpha}{s_t}\right) \d x + \int_{\nicefrac{s_t}{\alpha}}^{s_t} \frac{(1 + 2 \ln 2) u_{t+1}}{x} \d x \\
&\quad \quad + \int_{s_t}^{+\infty} \left(\frac{3(1 + 2 \ln 2)s_t u_{t+1}}{x^2} + \Pr\left[\sec_{j=t+1}^n v_j \geq x\right]\right) \d x.\\
&\leq u_{t+1} + \int_{u_{t+1}}^{\nicefrac{s_t}{\alpha}} \left(\Pr\left[\max_{j = t+1}^n v_j \in[x, \nicefrac{s_t}{\alpha}] \ \middle| \ \max_{j = t+1}^n v_j \in[u_{t+1}, \nicefrac{s_t}{\alpha}]\right]\right) \d x + (1 + 2 \ln 2) u_{t+1}\\
&\quad \quad + (1 + 2 \ln 2) u_{t+1} \ln \alpha +  \frac{3(1 + 2 \ln 2) s_tu_{t+1}}{s_t} + \int_{s_t}^{+\infty} \Pr\left[\sec_{j = t+1}^n v_j \geq x\right] \d x\\
&\leq u_{t+1} + \int_{u_{t+1}}^{\nicefrac{s_t}{\alpha}} \left(\Pr\left[\max_{j = t+1}^n v_j \in[x, \nicefrac{s_t}{\alpha}] \ \middle| \ \max_{j = t+1}^n v_j \in[u_{t+1}, \nicefrac{s_t}{\alpha}]\right]\right)\d x \\
&\quad \quad + \int_{s_t}^{+\infty} \Pr\left[\sec_{j = t+1}^n v_j \geq x\right] \d x + (1 + 2 \ln 2) (4 + \ln \alpha) u_{t+1}\\
&= \E\left[\max_{j = t+1}^n v_j \ \middle| \  \max_{j = t+1}^n v_j \in[u_{t+1}, \nicefrac{s_t}{\alpha}]\right] + \int_{s_t}^{+\infty} \Pr\left[\sec_{j = t+1}^n v_j \geq x\right] \d x\\
&\quad \quad + ((5 + 8 \ln 2) + (1 + 2 \ln 2) \ln \alpha) u_{t+1}.
\end{align*}
Finally, notice that
\begin{align*}
\int_{s_t}^{+\infty} \Pr\left[\sec_{j = t+1}^n v_j \geq x\right] \d x &\leq \int_{s_t}^{+\infty} \Pr\left[\max_{j = t+1}^n v_j \geq x\right]^2 \d x\\
&\leq \int_{s_t}^{+\infty} \frac{((1 + 2 \ln 2) u_{t+1})^2}{x^2} \d x\\
&= \frac{((1 + 2 \ln 2) u_{t+1})^2}{s_t} < \frac{(1 + 2 \ln 2)^2}{12} \cdot u_{t+1},
\end{align*}
where we once again used \cref{lem:approximate_myerson} in the second step, and our assumption $s_t > 12 \alpha u_{t+1}$ with $\alpha \geq e$ in the last step.

Therefore, 
\begin{align*}
& \E\left[v^{(t+1)}\right]\\ 
\leq & \E\left[\max_{j = t+1}^n v_j \ \middle| \ \max_{j = t+1}^n v_j \in[u_{t+1} ,\nicefrac{s_t}{\alpha}]\right] + \frac{(1 + 2 \ln 2)^2}{12} \cdot u_{t+1} + ((5 + 8 \ln 2) + (1 + 2 \ln 2) \ln \alpha) u_{t+1}\\
\leq & \E\left[\max_{j = t+1}^n v_j \ \middle| \ \max_{j = t+1}^n v_j \in[u_{t+1} ,\nicefrac{s_t}{\alpha}]\right] \cdot \left(1 + \frac{(1 + 2 \ln 2)^2}{12} + (5 + 8 \ln 2) + (1 + 2 \ln 2) \ln \alpha\right),
\end{align*}
where the last step is because the expectation is always at least $u_{t+1}$.
Thus,
\[
\frac{\rvcg_t(\alpha, c)}{\E\left[v^{(t+1)}\right]} \geq \frac{\frac{(5 - 2 \ln 2) \alpha t}{96k}}{\left(1 + \frac{(1 + 2 \ln 2)^2}{12} + (5 + 8 \ln 2) + (1 + 2 \ln 2) \ln \alpha\right)} \geq \frac{1}{512} \cdot \frac{\alpha t}{k \ln \alpha},
\]
when $\alpha \geq e$.

Taking $c=\ln\alpha$ and thus $\alpha^{1/c}=e$, we get that in all cases, $\rvcg_t(\alpha,c)$ is either at least $\frac{\alpha}{512\ln^2\alpha t}\cdot \E\left[v^{(t+1)}\right]$ (i.e. Case (3)) or $\Omega(t/\ln\alpha)\cdot s_t$ (i.e. Cases (1),(2)). 

To get the guarantees in the theorem and corollary statements, when $\tau'\in (1, e]$, it suffices to take $\alpha=e^{20}$ to get $\frac{\alpha}{512\ln^2\alpha} > e >= \tau'$ on the $\E\left[v^{(t+1)}\right]$ side, and $\Omega(1/\ln\alpha)$ is $\Omega(1)$ on the $s_t$ side. When $\tau'>e$, it suffices to take $\alpha \in [e^{19},e^{20}]\cdot\tau'\ln^2\tau'$ (with $c\in\mathbb{Z}_{+}$). It is easy to check $\frac{\alpha}{512\ln^2\alpha} \geq \tau'$ on the $\E\left[v^{(t+1)}\right]$ side, and $\Omega(1/\ln\alpha)$ is $\Omega(1/
\ln\tau')$ on the $s_t$ side.
\end{proof}
Now we prove the lemma used in the above proof.
\begin{lemma}
\label{lem:prob_bound}
$\Pr[v_i \geq x \forall i \in [1, t]] \leq \frac{8}{e} \cdot \frac{s_t}{x}$ for $x > s_t$.
\end{lemma}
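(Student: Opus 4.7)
The plan is to establish the inequality by combining a combinatorial bound coming from the bidder ordering with regularity of each $V_i$.

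The starting point is the observation
\[
\prod_{i=1}^{t} \Pr[v_i \geq s_t] \;\leq\; \Pr[v^{(t)} \geq s_t] \;=\; \tfrac{1}{2},
\]
which follows directly from the non-increasing ordering of the bidders by $\Pr[v_i \geq s_t]$ and the definition of $s_t$: if all of $v_1,\ldots,v_t$ exceed $s_t$, then at least $t$ of the $v_i$'s do, so $v^{(t)} \geq s_t$. Since the survival function is monotone, this also gives $\prod_{i=1}^{t} \Pr[v_i \geq x] \leq \tfrac{1}{2}$ for all $x \geq s_t$.

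I would then split on the magnitude of $x$. In the ``near'' regime $s_t < x \leq \tfrac{16}{e} s_t$, the bound above already suffices, because $\tfrac{8}{e} \cdot \tfrac{s_t}{x} \geq \tfrac{8}{e} \cdot \tfrac{e}{16} = \tfrac{1}{2}$. In the ``far'' regime $x > \tfrac{16}{e} s_t$, I would invoke regularity of each $V_i$. The basic bound is $x \cdot \Pr[v_i \geq x] \leq M_i$ (Myerson revenue of $V_i$) from \cref{lem:regularity}(1). For a bidder with $q_i \leq \tfrac{1}{2}$ (equivalently, median of $V_i$ at most $s_t$), \cref{lem:regularity}(2) applied with $\ell = s_t$ sharpens this to $M_i \leq 2 s_t q_i$, giving $\Pr[v_i \geq x] \leq 2 s_t q_i / x$. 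For bidders with $q_i > \tfrac{1}{2}$, one can instead pass to the conditional distribution $V_i \mid v_i \geq s_t$, which is itself regular because its revenue curve in quantile space is just a rescaling of the original's; applying \cref{lem:regularity}(1) to that conditional yields $\Pr[v_i \geq x] \cdot x \leq q_i \tilde\mu_i$, where $\tilde\mu_i$ is the conditional median.

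The hardest part, I expect, is pushing the large-$x$ case to yield the precise constant $\tfrac{8}{e}$ uniformly over all configurations of the $V_i$'s, in particular when many bidders have median above $s_t$ so the direct $s_t q_i$-bound on $M_i$ is unavailable. In that regime one must use the aggregate constraint $\prod_{i=1}^t q_i \leq \tfrac{1}{2}$ in tandem with the per-bidder conditional bounds: either the product $\prod q_i(x)$ decays quickly enough purely via the $t$-fold product of factors $\leq q_i \leq 1$, or enough bidders admit the sharp $M_i \leq 2 s_t q_i$ estimate for the per-bidder regularity bound to kick in. The threshold $\tfrac{16}{e}$ chosen in the case split is calibrated so that the two regimes meet exactly at the target constant $\tfrac{8}{e}$.
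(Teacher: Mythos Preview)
Your proposal has a genuine gap in the ``far'' regime $x > \frac{16}{e} s_t$. The per-bidder bounds you assemble --- $\Pr[v_i \geq x] \leq 2 s_t q_i / x$ when the median of $V_i$ is at most $s_t$, and $\Pr[v_i \geq x] \leq q_i \tilde\mu_i / x$ otherwise --- are too weak to control the product when bidders have large median. As a concrete obstruction, take $n = t = 2$ i.i.d.\ regular bidders with $\Pr[V \geq s_t] = 2^{-1/2}$ (so $\prod_i q_i = \tfrac12$ exactly) and conditional median $\tilde\mu = 100\, s_t$; such a regular $V$ exists, since the revenue-curve constraints at the three quantiles $q(s_t)=2^{-1/2}$, $q(\tilde\mu)=2^{-3/2}$ are easily made concave. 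Your product bound is then $(q\tilde\mu/x)^2 = 5000\, s_t^2 / x^2$, which exceeds $\frac{8}{e}\cdot\frac{s_t}{x}$ for all $x$ up to roughly $1700\, s_t$, far beyond your threshold $\frac{16}{e} s_t \approx 5.9\, s_t$. The aggregate constraint $\prod_i q_i \leq \tfrac12$ gives no additional leverage here because both $q_i$'s are near $1$, and neither bidder admits the sharp $M_i \leq 2 s_t q_i$ estimate. So the dichotomy you sketch (``either the product decays fast enough or enough bidders have small median'') fails on this instance.

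The paper's argument avoids this by working additively rather than multiplicatively. From concavity of the revenue curve at the quantile triple $(1,\, q_i(s_t),\, q_i(x))$ one gets
\[
\Pr[v_i < x] \;\geq\; \frac{x}{s_t}\,\Pr[v_i \geq x]\,\Pr[v_i < s_t],
\]
and the case split is at the \emph{data-dependent} point $y$ defined by $\min_{i\in[t]} \Pr[v_i \geq y] = \tfrac12$, not at a fixed multiple of $s_t$. For $x \leq y$ every $\Pr[v_i \geq x] \geq \tfrac12$, so the display gives $\Pr[v_i < x] \geq \frac{x}{2s_t}\Pr[v_i < s_t]$; summing and using $\sum_i \Pr[v_i < s_t] \geq \Pr[\min_i v_i < s_t] \geq \tfrac12$ yields $\prod_i \Pr[v_i \geq x] \leq \exp\!\big({-}\tfrac{x}{4s_t}\big) \leq \tfrac{4s_t}{ex}$. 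For $x > y$ one bootstraps from the bound at $y$ via the single bidder $i^*$ with median $y$: $\Pr[v_{i^*}\geq x \mid v_{i^*}\geq y] \leq 2y/x$ by \cref{lem:regularity}, giving the extra factor of $2$. The two ideas your plan is missing are precisely this additive aggregation through $\sum_i \Pr[v_i < s_t]$ and the dependence of the split point on the bidders' medians.
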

\begin{proof}
For a regular distribution $V_i$, the revenue is concave in probability of selling. Taking the probability of selling as $1$, $\Pr[v_i \geq s_t]$, and $\Pr[v_i \geq x]$, we have
\[
s_t \cdot \Pr[v_i \geq s_t] \geq x \cdot \Pr[v_i \geq x] \cdot \frac{1 - \Pr[v_i \geq s_t]}{1 - \Pr[v_i \geq x]} + 0.
\]
Rearranging and using $\Pr[v_i \geq s_t] \leq 1$, we have
\[
\Pr[v_i < x] \geq \frac{x}{s_t} \cdot \Pr[v_i \geq x] \cdot \Pr[v_i < s_t].
\]
Define $y$ so that $\min_{i = 1}^t \Pr[v_i \geq y] = \frac{1}{2}$.
Now we consider the following two cases:

\noindent\textbf{Case (1):} We have $x \leq y$. In this case,
\[
\Pr[v_i < x] \geq \frac{x}{2 s_t} \cdot \Pr[v_i < s_t].
\]
Therefore,
\begin{align*}
\Pr[v_i \geq x \forall i \in [1, t]] &= \prod_i \Pr[v_i \geq x]\\
&= \exp\left(\sum_i \ln (1 - \Pr[v_i < x])\right)\\
&\leq \exp\left(-\sum_i \Pr[v_i < x]\right)\\
&\leq \exp\left(-\sum_i \frac{x}{2 s_t} \cdot \Pr[v_i < s_t]\right).
\end{align*}

Note that $\sum_i \Pr[v_i < s_t] \geq \Pr[\min_{i = 1}^t v_i < s_t] \geq \frac{1}{2}$. We get
\[
\Pr[v_i \geq x \forall i \in [1, t]] \leq \exp\left(- \frac{x}{4 s_t}\right) \leq \frac{4s_t}{e x}.
\]

\noindent\textbf{Case (2):} Now consider the case where $x > y$. Let $i^*$ be the buyer with $\Pr[v_{i^*} \geq y] = \frac{1}{2}$. We have
\[
\Pr[v_{i^*} \geq x] \leq \frac{y}{x}
\]
by \cref{lem:regularity}. Therefore,
\begin{align*}
\Pr[v_i \geq x \forall i \in [1, t]] &\leq \Pr[v_i \geq y \forall i \in [1, t]] \cdot \Pr[v_{i^*} \geq x \mid v_{i^*} \geq y]\\
&\leq \frac{4s_t}{e y} \cdot \frac{2y}{x} < \frac{8 s_t}{e x}.
\end{align*}
Combining the two cases gives the lemma statement.
\end{proof}

\section{Characterizations of Optimal Prior-Independent Mechanisms}
\label{sec:characterization}

In this section, we give characterizations for the optimal prior-independent mechanisms for 2 buyers. Our proofs are inspired by and generalize those of \cite{allouah2020prior} for i.i.d\@. distributions.

\begin{definition} [Scale-Free Mechanisms \cite{allouah2020prior}]
A mechanism $\mec$ is \emph{scale-free} if
\[
x_i(\theta v_i, \theta v_{-i}) = x_i(v_i, v_{-i})
\]
for any $\theta > 0, v_i, v_{-i} \geq 0, i = 1, 2$.
\label{def:scale_free}
\end{definition}

\begin{lemma}
If the guarantee $\mec \geq \min(\alpha \cdot \myr, \beta \cdot \spa)$ is satisfied by a prior-independent mechanism $\mec$ with finite Arzel\`a variation, then the same guarantee can be satisfied by a scale-free prior-independent mechanism.
\label{lem:scale}
\end{lemma}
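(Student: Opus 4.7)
The plan is to use a standard averaging-over-rescalings argument, exactly in the spirit of Allouah--Besbes' reduction to scale-free mechanisms in the i.i.d. setting, and then take a compactness limit to produce a truly scale-free mechanism.

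First, given a prior-independent mechanism $\mec$ satisfying $\mec \geq \min(\alpha\cdot\myr,\beta\cdot\spa)$, I define for each $\theta>0$ a rescaled mechanism $\mec_\theta$ as follows: on bids $(v_1,v_2)$, run $\mec$ on the scaled bids $(v_1/\theta,v_2/\theta)$, use the same allocation, and multiply the computed payments by $\theta$. Since $\mec$ is DSIC and prior-independent, so is $\mec_\theta$. The key observation is that for any pair of distributions $(V_1,V_2)$ one has
\[
\mec_\theta(V_1,V_2)=\theta\cdot\mec(V_1/\theta,V_2/\theta),\qquad \myr(V_1,V_2)=\theta\cdot\myr(V_1/\theta,V_2/\theta),
\]
and analogously for $\spa$, so $\mec_\theta$ satisfies the same $\min(\alpha\cdot\myr,\beta\cdot\spa)$ guarantee on $(V_1,V_2)$ that $\mec$ satisfies on $(V_1/\theta,V_2/\theta)$; in particular it satisfies it on every $(V_1,V_2)$.

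Second, for each integer $N\ge 1$ I form the mixture mechanism $\mec^{(N)}$ by drawing $\theta$ from the log-uniform distribution on $[1/N,N]$ (i.e.\ $\log\theta$ uniform on $[-\log N,\log N]$) and running $\mec_\theta$. By linearity of expectation, $\mec^{(N)}$ is prior-independent and still satisfies $\mec^{(N)}\geq\min(\alpha\cdot\myr,\beta\cdot\spa)$. Moreover, rescaling the input bids by a factor $c>0$ is equivalent to shifting $\log\theta$ by $\log c$; since the log-uniform distribution on $[-\log N,\log N]$ has total variation at most $|\log c|/\log N$ with its shift by $\log c$, the allocation rule of $\mec^{(N)}$ becomes scale-invariant up to an $O(1/\log N)$ error on each pair of profiles $(v_1,v_2)$ and $(cv_1,cv_2)$.

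Third, I pass to the limit $N\to\infty$ using compactness. On any countable dense set of bid profiles in $\mathbb{R}_{\geq 0}^2$, the allocation probabilities and normalized payments of $\mec^{(N)}$ take values in a compact set, so a diagonal argument yields a subsequence $N_k\to\infty$ along which the allocation and payment rules converge pointwise to a limit $\mec'$. Monotonicity of allocations and Myerson's payment identity are preserved under pointwise limits, so $\mec'$ is DSIC. The $O(1/\log N_k)$ scale-invariance error vanishes in the limit, giving the identity in \cref{def:scale_free}. Finally, by Fatou / dominated convergence applied to the revenue functional on any fixed instance, the guarantee $\mec'\geq\min(\alpha\cdot\myr,\beta\cdot\spa)$ is inherited from the $\mec^{(N_k)}$.

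The main obstacle will be the limit step, namely making the compactness/diagonal extraction rigorous and verifying that DSIC, individual rationality, and the revenue guarantee all pass to the limit; one needs some care to ensure that payments (which a priori are unbounded) behave well, but since scale-freeness of the limit allocation reduces the question to a one-parameter family of ratios $v_2/v_1$ one can truncate and control payments by the highest bid.
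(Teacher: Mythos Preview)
Your approach is essentially the same as the one the paper defers to: the paper gives no self-contained argument here but simply asserts that the Allouah--Besbes reduction to scale-free mechanisms in the i.i.d.\ case extends verbatim, and your averaging-over-rescalings-then-limit construction is precisely that reduction. The one point worth tightening is the passage from ``each $\mec_\theta$ satisfies the guarantee'' to ``the mixture $\mec^{(N)}$ satisfies the guarantee'': this works because for a fixed instance $(V_1,V_2)$ the right-hand side $\min(\alpha\cdot\myr,\beta\cdot\spa)$ is a constant independent of $\theta$, so averaging the left-hand side over $\theta$ preserves the inequality; you state this but it is the step a careless reader might worry about given that $\min$ is not linear. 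Your acknowledged obstacle in the compactness/limit step is real but routine, and your plan (monotone allocations in $[0,1]$, payments dominated by the highest bid which has finite expectation whenever $\spa$ is finite, extend from a dense set via monotonicity) is the standard way through.
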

\cref{lem:scale} intuitively makes sense: If the mechanism does not know the prior distributions, then after any scaling, the instance should be essentially the same. The proof of the i.i.d\@. case in \cite{allouah2020prior} directly extends to this heterogeneous case.

\begin{definition} [Symmetric Mechanisms]
A mechanism $\mec$ is \emph{symmetric} if
\[
x_1(v_1 = v, v_2 = v') = x_2(v_1 = v', v_2 = v)
\]
for any $v, v' \geq 0$.
\label{def:symmetric}
\end{definition}

\begin{lemma}
If the guarantee $\mec \geq \min(\alpha \cdot \myr, \beta \cdot \spa)$ is satisfied by a scale-free prior-independent mechanism $\mec$, then the same guarantee can be satisfied by a scale-free symmetric prior-independent mechanism.
\label{lem:symmetric}
\end{lemma}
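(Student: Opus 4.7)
The plan is to symmetrize the given scale-free mechanism $\mec$ by averaging it with its role-swapped version. Specifically, define $\mec'$ as follows: on input $(v_1, v_2)$, flip a fair coin. With probability $\tfrac{1}{2}$, run $\mec(v_1, v_2)$ and output its allocation and payments as-is. With probability $\tfrac{1}{2}$, run $\mec(v_2, v_1)$ (i.e.\ present $v_2$ as bidder~1's bid and $v_1$ as bidder~2's bid), and then relabel: whatever allocation/payment $\mec$ gives to ``bidder $i$'' in this run is assigned to bidder $3-i$ in the output of $\mec'$.

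First I would verify that $\mec'$ satisfies the structural properties. It is prior-independent because $\mec$ is (no distributional information is used in either branch or in the coin flip). It is DSIC since in each branch the bidders face a DSIC mechanism (in the second branch, bidder $i$ effectively plays the role of bidder $3-i$ in $\mec$, and misreporting is dominated there too). It is scale-free: under the scaling $(v_1, v_2) \mapsto (\theta v_1, \theta v_2)$, each branch invokes $\mec$ on a uniformly scaled input, and since $\mec$ is scale-free the allocation is unchanged, so the same holds for $\mec'$. Symmetry follows by a direct computation from \cref{def:symmetric}: writing $x'_i$ for the allocation of $\mec'$, we get $x'_1(v, v') = \tfrac{1}{2} x_1(v, v') + \tfrac{1}{2} x_2(v', v) = x'_2(v', v)$.

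The revenue guarantee transfers via the key observation that both benchmarks are symmetric functions of the (unordered) pair of distributions: $\myr(V_1, V_2) = \myr(V_2, V_1)$ and $\spa(V_1, V_2) = \spa(V_2, V_1)$, since Myerson's revenue and second-price revenue depend only on the multiset of value distributions. Applying the hypothesized guarantee to both $(V_1, V_2)$ and $(V_2, V_1)$, and using linearity of expected revenue over the coin flip, I would conclude
\[
\mec'(V_1, V_2) = \tfrac{1}{2} \mec(V_1, V_2) + \tfrac{1}{2} \mec(V_2, V_1) \geq \min\bigl(\alpha \cdot \myr(V_1, V_2), \ \beta \cdot \spa(V_1, V_2)\bigr),
\]
which is precisely the desired bound.

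The only step requiring care is the DSIC verification in the ``swapped'' branch, since superficially it looks like bidder $1$ might benefit from accounting for the fact that her bid is sometimes treated as bidder~2's bid; this is resolved by noting that the random relabeling is independent of all bids and that $\mec$ is dominant-strategy truthful in each coordinate, so truthful reporting remains optimal conditional on either outcome of the coin flip and hence unconditionally. All other pieces (scale-freeness, symmetry, and the benchmark symmetry of $\myr$ and $\spa$) are routine.
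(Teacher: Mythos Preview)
Your proposal is correct and takes essentially the same approach as the paper: symmetrize by averaging $\mec$ with its role-swapped version via a fair coin flip, then use the symmetry of the benchmarks $\myr$ and $\spa$ in the pair of distributions to transfer the guarantee. The paper's proof is terser, but the idea is identical; you simply spell out the DSIC, scale-free, and symmetry verifications that the paper leaves implicit.
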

\begin{proof}
If a scale-free prior-independent mechanism $\mec$ satisfies the lemma condition, then another mechanism $\mec'$ constructed by switching the roles of Buyer 1 and Buyer 2 also satisfies the lemma condition. Randomizing between $\mec$ and $\mec'$ with equal probability is a scale-free symmetric prior-independent mechanism, and it also guarantees a revenue of $\min(\alpha \cdot \myr, \beta \cdot \spa)$.
\end{proof}

\begin{lemma} [Myerson's Lemma~\cite{myerson1981optimal}]
$p_i(v_i, v_{-i}) = v_i \cdot x_i(v_i, v_{-i}) - \int_0^{v_i} x_i(t, v_{-i}) \d t + p_i(0, v_{-i})$, where $x_i$ is the allocation function and $p_i$ is the payment function.
\label{lem:myerson}
\end{lemma}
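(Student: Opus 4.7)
The plan is to derive the payment identity from the dominant-strategy incentive compatibility (DSIC) constraint by a standard envelope-theorem argument, viewing buyer $i$'s interim utility as a function of the reported value and exploiting the fact that truthful reporting is optimal.

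First I would fix any realization of $v_{-i}$ (the argument is pointwise in $v_{-i}$, so we can treat $x_i(\cdot, v_{-i})$ and $p_i(\cdot, v_{-i})$ as functions of the single variable $v_i$). Define the buyer's utility from reporting $r$ when her true value is $v_i$ as $u_i(r; v_i) := v_i \cdot x_i(r, v_{-i}) - p_i(r, v_{-i})$, and let $U_i(v_i) := u_i(v_i; v_i)$ denote the on-path utility. The DSIC condition says $U_i(v_i) \geq u_i(r; v_i)$ for all $r \geq 0$. Applying this at the pair of reports $(v_i, v'_i)$ and then $(v'_i, v_i)$ gives the two-sided inequality
\[
(v_i - v'_i) \cdot x_i(v'_i, v_{-i}) \ \leq \ U_i(v_i) - U_i(v'_i) \ \leq \ (v_i - v'_i) \cdot x_i(v_i, v_{-i}).
\]
This simultaneously establishes (i) monotonicity of $x_i(\cdot, v_{-i})$ in the reported value and (ii) that $U_i$ is Lipschitz, hence absolutely continuous, with derivative equal to $x_i(v_i, v_{-i})$ at every point of differentiability.

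Next, I would integrate: since $U_i$ is absolutely continuous,
\[
U_i(v_i) = U_i(0) + \int_0^{v_i} x_i(t, v_{-i}) \,\mathrm{d} t.
\]
Unfolding the definition $U_i(v_i) = v_i \cdot x_i(v_i, v_{-i}) - p_i(v_i, v_{-i})$ and $U_i(0) = -p_i(0, v_{-i})$ and solving for $p_i(v_i, v_{-i})$ yields exactly the claimed identity
\[
p_i(v_i, v_{-i}) = v_i \cdot x_i(v_i, v_{-i}) - \int_0^{v_i} x_i(t, v_{-i}) \,\mathrm{d} t + p_i(0, v_{-i}).
\]

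The only non-trivial point is justifying the fundamental theorem of calculus step, namely moving from the two-sided inequality to the integral representation of $U_i$. The inequality shows that the difference quotient of $U_i$ is sandwiched between $x_i(v'_i, v_{-i})$ and $x_i(v_i, v_{-i})$, so $U_i$ is Lipschitz on any bounded interval (since a monotone allocation is locally bounded) and hence absolutely continuous; the monotone function $x_i(\cdot, v_{-i})$ is Riemann integrable and agrees with $U_i'$ almost everywhere, so the FTC applies. Everything else is algebraic rearrangement.
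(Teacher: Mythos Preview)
Your argument is correct and is exactly the standard envelope-theorem derivation of Myerson's payment identity. The paper does not actually supply a proof of this lemma---it is stated with a citation to \citet{myerson1981optimal} and treated as a known result---so there is nothing to compare against; your write-up would serve perfectly well as a self-contained proof if one were desired.
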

Without loss of generality, a revenue-maximizing auction should set $p_i(0, v_{-i})$ to be $0$, as it cannot be positive by the individual rationality (IR) constraint. This allows us to derive the revenue for any mechanism using only the allocation functions. Let $\rev_1(F, v_2)$ be the expected revenue from Buyer 1 when $v_1$ is drawn from $F$ and $v_2$ is fixed. We have
\begin{align*}
\rev_1(F, v_2) &= \int_0^M p_1(v_1, v_2) \d F(v_1)\\
&= \int_0^M \left[v_1 \cdot x_1(v_1, v_2) - \int_0^{v_1} x_1(t, v_2) \d t\right] \d F(v_1)\\
&= \int_0^M \left[v_1 \cdot x_1(v_1 / v_2, 1) - \int_0^{v_1} x_1(t / v_2, 1) \d t\right] \d F(v_1)
\end{align*}
where the last step above uses \cref{lem:scale}.

Similarly, let $\rev_2(v_1, G)$ be the expected revenue from Buyer 2 when $v_1$ is fixed and $v_2$ is drawn from $G$. We have
\[
\rev_2(v_1, G) = \int_0^M \left[v_2 \cdot x_2(1, v_2 / v_1) - \int_0^{v_2} x_2(1, t / v_1) \d t\right] \d G(v_2).
\]

Using the symmetry condition of \cref{lem:symmetric}, we get
\[
\rev_2(v_1, G) = \int_0^M \left[v_2 \cdot x_1(v_2 / v_1, 1) - \int_0^{v_2} x_1(t / v_1, 1) \d t\right] \d G(v_2).
\]

Abbreviating $x_1(v, 1)$ as $x_1(v)$, we have
\begin{align*}
\rev &= \int_0^M \rev_1(F, v_2) \d G(v_2) + \int_0^M \rev_2(v_1, G) \d F(v_1)\\
&= \int_0^M \int_0^M \left[v_1 \cdot x_1(v_1 / v_2) - \int_0^{v_1} x_1(t / v_2) \d t\right] \d F(v_1) \d G(v_2) +\\
&\quad \quad \int_0^M \int_0^M \left[v_2 \cdot x_1(v_2 / v_1) - \int_0^{v_2} x_1(t / v_1) \d t\right] \d G(v_2) \d F(v_1)\\
&= \int_0^M \int_0^M \left[v_1 \cdot x_1(v_1 / v_2) + v_2 \cdot x_1(v_2 / v_1) - \int_0^{v_1} x_1(t / v_2) \d t - \int_0^{v_2} x_1(t / v_1) \d t\right] \d F(v_1) \d G(v_2).
\end{align*}

Let $x_1(r) = \sum_{k = 1}^n \frac{1}{n} \cdot \mathbf{1}[r \geq \gamma_k]$ with an even $n$, as an approximation to the possibly continuous increasing function of $x_1(r)$.\footnote{This approximation can be arbitrarily close similar to the proof in \cite{allouah2020prior}.} Assume the $\gamma_k$'s are decreasing. The constraint on $x_1(r)$ is: $x_1(r) + x_1(1/r) \leq 1, \,\forall r$, which is equivalent to $\gamma_k \cdot \gamma_{n + 1 - k} > 1, \forall k$.

We rewrite $\rev$ as:
\begin{align*}
\rev &= \sum_{k = 1}^n \frac{1}{n} \cdot \int_0^M \int_0^M \bigg[v_1 \cdot \mathbf{1}[v_1 / v_2 \geq \gamma_k] + v_2 \cdot \mathbf{1}[v_2 / v_1 \geq \gamma_k] - \\
&\quad \quad \int_0^{v_1} \mathbf{1}[t / v_2 \geq \gamma_k] \d t - \int_0^{v_2} \mathbf{1}[t / v_1 \geq \gamma_k] \d t\bigg] \d F(v_1) \d G(v_2).
\end{align*}

Note that one pair of $(\gamma_k, \gamma_{n + 1 - k})$ suffices to give optimal revenue (within the class of scale-free symmetric mechanisms) against any fixed instance.

\end{document}